\newcommand{\bvec}[1]{\mathbf{#1}}
\newcommand{\vr}{\bvec{r}}
\newcommand\Z{\mathbb{Z}}
\newcommand\R{\mathbb{R}}
\newcommand\N{\mathbb{N}}
\newcommand\C{\mathbb{C}}
\newcommand\Sc{\mathscr{S}}
\newcommand{\ldos}{\mathscr{D}_n}
\newcommand{\dd}{~{\rm d}}
\newcommand{\dist}{R}
\newcommand{\Ntr}{L}
\newcommand{\Nit}{N_{\rm i}}
\newcommand{\ham}{\mathcal{H}}
\newcommand{\ylm}{Y_{\ell m}(\hat{\bm{r}})}
\newtheorem{theorem}{Theorem}[section]
\newtheorem{lemma}{Lemma}[section]
\newtheorem{remark}{Remark}[section]
\renewcommand{\theequation}{\arabic{section}.\arabic{equation}}
\renewcommand{\thefigure}{\arabic{section}.\arabic{figure}}
\title{Approximations of the Green's Function in
\\[1ex]
Multiple Scattering Theory for Crystalline Systems\thanks{
This work was funded by the National Key R \& D Program of China under grants 2019YFA0709600 and 2019YFA0709601.
The first author was also supported by the National Natural Science Foundation of China (No. 12301548) and the Open Project Program of Key Laboratory of Mathematics and Complex System (Grant No. K202302), Beijing Normal University.
The second author was also supported by the National Natural Science Foundation of China (No. 12371431).
}
}
\date{}
\author{Xiaoxu Li\thanks{
{\it xiaoxuli@bnu.edu.cn}.
Faculty of Arts and Sciences, Beijing Normal University, Zhuhai 519087, Guangdong, China.}
~and~ Huajie Chen\thanks{
Corresponding. {\it chen.huajie@bnu.edu.cn}.
School of Mathematical Sciences, Beijing Normal University, China.}
}
\begin{document}
\maketitle

\begin{abstract}
The multiple scattering theory (MST) is a Green's function method that has been widely used in electronic structure calculations for crystalline disordered systems.
The key property of the MST method is the scattering path matrix (SPM) that characterizes the Green’s function within a local solution representation. 
This paper studies various approximations of the SPM, under the condition that an appropriate reference is used for perturbation.
In particular, we justify the convergence of the SPM approximations with respect to the size of scattering region and the length of scattering path, which are the central numerical parameters to achieve a linear-scaling MST method.
We present numerical experiments on several typical systems to support the theory.
\end{abstract}

\section{Introduction}
\label{sec:introduction}

The disordered systems like random substitutional alloys play an important role in material sciences \cite{ebert11, ruban08} and pose a significant challenge in their electronic structure simulations. 
The difficulty arises from the inability to apply Bloch's theory \cite{bloch1929quantenmechanik} due to the absence of translational invariance.
There are primarily two approaches to perform electronic structure calculations for disordered systems.
One approach is the so-called supercell method \cite{braun2020sharp, cances2013mathematical}, which simulates a large cluster by replicating primitive cells with artificial boundary condition.
The supercell-based calculations are usually computationally demanding as one may need extremely large supercells to achieve the required accuracy.
Another category of approaches is the Green's function method, which effectively avoids the necessity to solve eigenvalue problems with artificial boundary conditions. 
One significant advantage of Green's function-based approaches is their inherent capability to describe the ``averaged" physical properties of disordered systems, establishing direct connections to physical observables \cite{ebert11, ruban08}. 
This attribute makes them particularly well-suited for simulations of alloy materials.
We mention that standard Green's function methods commonly refer to resolvent-based methods \cite{colbrook2023computing,li2018pexsi,lin16,thicke2021computing}, which still require discretizing the resolvent under some basis and then truncating it into algebraic linear systems.

As one of the most widely used Green's function methods, multiple scattering theory (MST), also known as the Korringa-Kohn-Rostoker (KKR) method \cite{kohn54, korringa47, zeller95}, is conceptually distinct from the framework of standard Green’s function methods. 
It provides a direct representation of the real-space Green's function without a Galerkin discretization procedure.
This approach offers a unified treatment for a wide range of material systems with defects, including substitutional atoms, interstitial impurities, surfaces and interfaces \cite{asato99, galanakis02, modinos00, nonas98}. 
The basic idea of MST is to partition the system into non-overlapping regions, solve each part separately and then combine the partial solutions to construct an overall Green's function directly by using the perturbation theory.
A distinctive feature of MST is the ability to completely separate the local potential aspects from the atomic configuration.
When a ``screened" potential is carefully chosen to construct the reference (see Section \ref{subsec:screened_kkr}), one could even achieve a ``linear-scaling" algorithm (see Section \ref{sec:linear-scaling}).
This scalability has been employed in numerous material simulations, especially in disordered systems for decades, see e.g. \cite{gorbatov21, mchugh20, saunderson20}.

There have been many efforts on improving the MST methods in practical simulations.
In \cite{zeller95} the authors introduced a screened form of MST that facilitates efficient evaluation of the Green's function.
Based on this formulation, \cite{thiess12, zeller08} proposed linear-scaling algorithms and demonstrated the convergence numerically.
In \cite{nicholson1994stationary, wang1995order}, a self-consistent MST method was proposed based on the observation that the electron density on a particular atom within a large system can be well approximated by considering only the electron scattering processes in a finite spatial region centered at that atom. 
We mention that there have also been efforts on exploiting the Green's function to study the topological insulators and edge states of extended systems, as highlighted in \cite{colbrook2021computing, colbrook2023computing, thicke2021computing} and the references therein.

Despite the success of MST in practice, the work on theoretical understanding and numerical analysis for this type of approach is very rare to our best knowledge.
There has been only one recent work \cite{li2023numerical} that provided an {\em a priori} error estimate for the MST method concerning the truncation of the angular momentum.
We point out that there has been mathematical works \cite{cances2008new, cances2008non, cances2013mean} devoted to studying the supercell method and justifying the convergence of the electronic structure models for disordered systems with respect to the supercell size.
We have not achieved a counterpart in this work, but will pursue it in our future work.
Finally, the MST method is highly related to a category of full-potential calculation methods, say the muffin-tin orbital (MTO) and the augmented plane wave (APW) methods, and we refer to some existing mathematical analysis on this aspect \cite{chen15a,chen2022augmented}.

The purpose of this work is to provide a rigorous analysis for the Green's function in MST.
We characterize the real-space Green's function through utilization of the scattering path matrix (SPM), which is fundamental in the MST formalism. 
Under the condition that an appropriate screened potential is chosen for the reference system, we establish the convergence of SPM with respect to the size of scattering region and the length of scattering path.

\vskip 0.2cm

{\bf Outline.} 
The rest of the paper is organized as follows.
In Section \ref{sec:preliminary}, we briefly discuss the MST method for disordered systems, and provides a representation of the Green's function by using SPM.
In Section \ref{sec:analysis}, we study numerical approximation of SPM and justify the convergence with respect to the size of scattering region and length of scattering path respectively.
In Section \ref{sec:numerics}, we present some numerical experiments to support the theory. 
Finally, we give some concluding remarks. 

\vskip 0.2cm

{\bf Notation.} 
In this paper, we will denote the cardinality of a finite discrete set $\mathcal{I}$ by $\#\mathcal{I}$.
The notations $\|\cdot\|_1$ and $\|\cdot\|_{\infty}$ are used to denote the $1$-norm and $\infty$-norm for a vector as well as the induced norms for a matrix.
The Schwartz space, the set of all rapidly decreasing smooth functions at positive infinity, will be denoted by $\Sc$ with
\begin{equation*}
\Sc = \Bigg\{f\in C^{\infty}\big(\R\big) :~ \sup_{x\geq 0}(1+|x|^2)^{t} \sum_{0\leq\alpha\leq\bar{\alpha}} |\partial^{\alpha}f(x)|<\infty \quad {\rm for~any}~t>0~{\rm and}~\bar{\alpha}\in \N \Bigg\} .
\end{equation*}
The symbol $C$ will denote a generic positive constant that may change from one line to the next, which will always remain independent of the approximation parameters under consideration and the choice of test functions.
The dependence of $C$ will normally be clear from the context or stated explicitly.

\section{Multiple scattering theory}
\label{sec:preliminary}
\setcounter{equation}{0} \setcounter{figure}{0}

The multiple scattering theory (MST) method formulates the real-space Green's function of a system by the scattering path matrix (SPM), from which the quantitative of interests can then be obtained.
In this section, we first briefly introduce a crystalline based disordered system,
then represent the Green's function by SPM within the MST framework, and finally discuss the choice of the screened potential for constructing an appropriate reference medium.

\subsection{Disordered systems}
\label{sec:disorder}

Let $d\in\{1,2,3\}$ be the space dimension.
We consider a $d$-dimensional disordered system embedded in a homogeneous lattice.
The homogeneous reference configuration is given by a Bravais lattice $\Lambda=A\Z^d$ with some non-singular matrix $A\in \mathbb{R}^{d \times d}$.
Then we can decompose the space $\R^d$ into non-overlapping cells $\{\Omega_n\}_{n\in\Lambda}$ based on the reference configuration $\Lambda$, such that $\Omega_n$ is centred at site $n\in\Lambda$ satisfying
\begin{align*}
\Omega_n\cup\Omega_k=\emptyset \quad\forall~n\neq k
\qquad{\rm and}\qquad
\R^d=\bigcup_{n\in\Lambda}\Omega_n .
\end{align*}
One of the most widely used decomposition is the Voronoi cells, defined by
\begin{eqnarray}\nonumber
\Omega_n := \Big\{\vr\in\R^d:~ |\vr-n|\le |\vr-k| \quad\forall~k\in\Lambda \Big\} \qquad{\rm for}~ n\in\Lambda .
\end{eqnarray}
We show in Figure \ref{fig:lattice} an example of the Voronoi cells for a two-dimensional triangular lattice.

\begin{figure}[!htb]
\centering
\includegraphics[width=16.0cm]{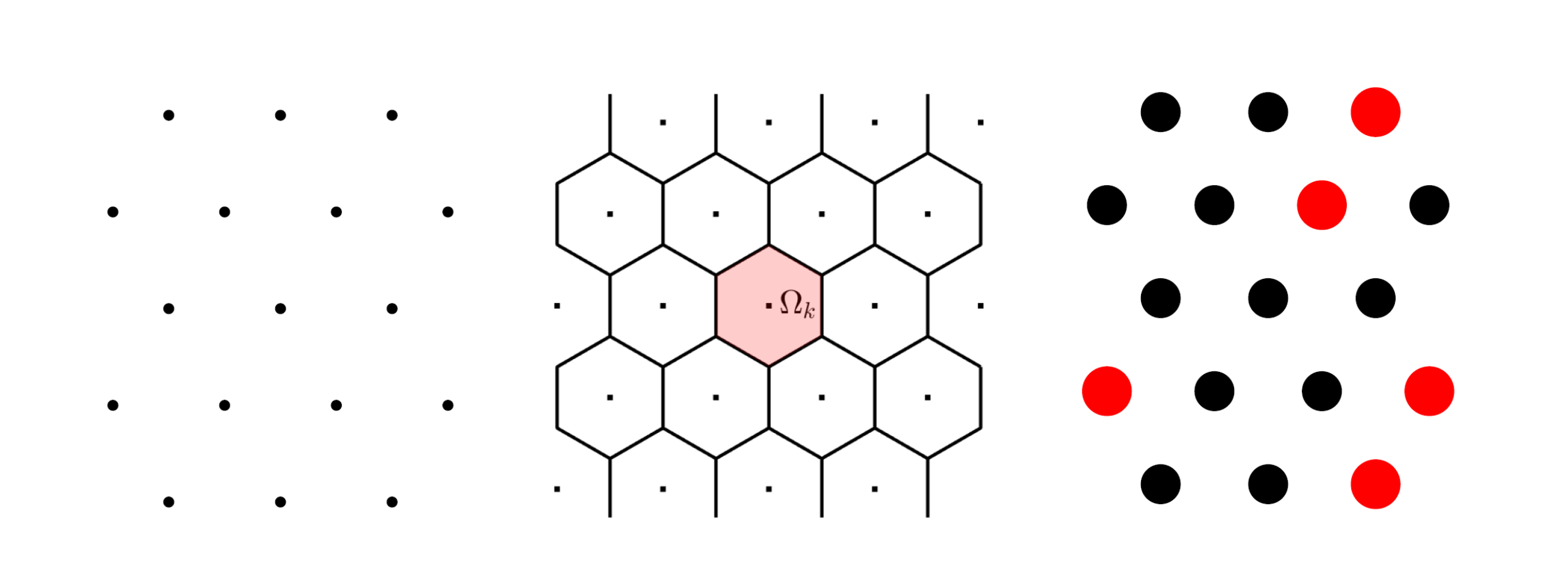}
\caption{Left: A triangular Bravais lattice. Middle: A space-filling decomposition with Voronoi cells. Right: A random two-component alloy configuration.}
\label{fig:lattice}
\end{figure}

The potential $V$ of the system is determined by the atomic species and positions in each cell $\Omega_n~(n\in\Lambda)$.
Let us denote the potential restricted on each cell by $v_n:=V|_{\Omega_n}$ for simplicity.
For a homogeneous lattice, the atomic configurations are the same in all cells, and hence $v_n\equiv v$ for any $n\in\Lambda$.
For a disordered system, the atomic configuration in each cell are random variables, which gives various $v_n$ in different cells.
Therefore, the system will lose the translation invariance.
We point out that the analysis in this work can be extended to more general crystalline systems with local defects (for example, impurities, vacancies, dislocations, etc.) without difficulty \cite{chen2019geometry, ehrlacher2016analysis}.

To study the electronic structure of a specific system with potential $V=\sum_{n\in\Lambda}v_n$, one needs to calculate the spectrum distribution of the Hamiltonian
\begin{eqnarray}
\label{ham}
\ham = -\Delta + V(\vr).
\end{eqnarray}
Note that a prefactor $1/2$ in front of the Laplacian is ignored throughout for simplicity of presentations.
Since the disordered atomic configuration destroys the translational symmetry of the system, the Bloch's theory can not be applied to obtain the ``bands" \cite{bloch1929quantenmechanik}.
A standard approach for such non-periodic systems is the so-called supercell method, which restricts the system in a bounded domain with an artificial periodic boundary condition \cite{braun2020sharp, cances2013mathematical}.
However, this method usually needs very big supercell to achieve the required accuracy, and hence requires huge computational cost.
Another widely used approach is the Green's function method, which focuses on simulating the Green's functions of systems so that one can effectively avoid solving the eigenvalues of the Hamiltonian.
This type of approach is particularly applicable to disordered systems.
The MST stands out as the most natural method for a direct representation of the real-space Green's function, which is viewed as the kernel of the resolvent, and resolves the spectral data of the Hamiltonian without a Galerkin discretization procedure \cite{economou06, gonis00}.

For a given energy parameter $z\in\C$, the real-space Green's function of $\ham$ can be defined by solution of the following differential equation with Delta function as the source term
\begin{eqnarray}
\label{model-green-eq}
\big( z-\ham\big) G({\vr}, {\vr}'; z) = \delta(\vr-\vr') \qquad \vr,\vr'\in\mathbb{R}^d
\end{eqnarray}
with the asymptotic boundary condition $G(\vr,\vr';z)\rightarrow 0$ as $|\vr-\vr'|\rightarrow \infty$.
It describes the propagation of an electron from point $\vr$ to $\vr'$ at energy $z$.
The real-space Green's function can provide a natural approach to characterize the ``local" density of states (DoS), and hence a localized description of the electronic properties of the system.

By using the spectrum theory, we can formulate the local DoS in cell $\Omega_n$ with a contour integration of the Green's function.
Let $f\in \Sc$ correspond to some physical observable, which admits an analytic continuation to the strip on the complex plane $\{z\in \C,~|{\rm Im} z|\leq \delta\}$.
Let $\mathscr{C}$ be a contour in the complex plane enclosing the spectrum of $\ham$ and avoiding all the singularities of $f$, and satisfy that for any $z\in\mathscr{C}$,
\begin{eqnarray}
\label{resolvant-dist}
\min\Bigg\{
{\rm dist}\big(z,\mathfrak{s}(f)\big), ~
{\rm dist}\left(z,\mathfrak{d}(\ham)\right)\Bigg\} \geq \frac{\delta}{2} ,
\end{eqnarray}
where $\mathfrak{d}(\ham)$ represents the spectrum set of $\ham$ and $\mathfrak{s}(f)$ represents the non-analytic region of $f$. 
We refer to left panel of Figure \ref{fig-contour} for an example of the Fermi-Dirac function (c.f. Remark \ref{remark:dos}), where the poles of $f$ are located on the imaginary axis and $\mathscr{C}$ is chosen to be a dumbbell-shaped contour. 
Note that the upper bound of the spectrum may not exist, and thus the contour could be left open (see right panel of Figure \ref{fig-contour} for example and \cite{levitt2020screening}), which is nevertheless not a concern since $f\in\Sc$ decays sufficiently fast. 
For simplicity of the presentation, we will only consider the closed contours in this work.
In addition, we also introduce an alternative Cauchy contour in \ref{append:contour}, which is more efficient for practical simulations.
Then the local DoS in cell $\Omega_n$ can be written as
\begin{eqnarray}
\label{ldos_n}
\mathscr{D}_n(f) := \frac{1}{2\pi i} \oint_{\mathscr{C}} f(z) \int_{\Omega_n} G(\vr,\vr;z) \dd\vr \dd z
\qquad n\in\Lambda.
\end{eqnarray}

\begin{figure}[htbp]
\centering
\includegraphics[width=8.cm]{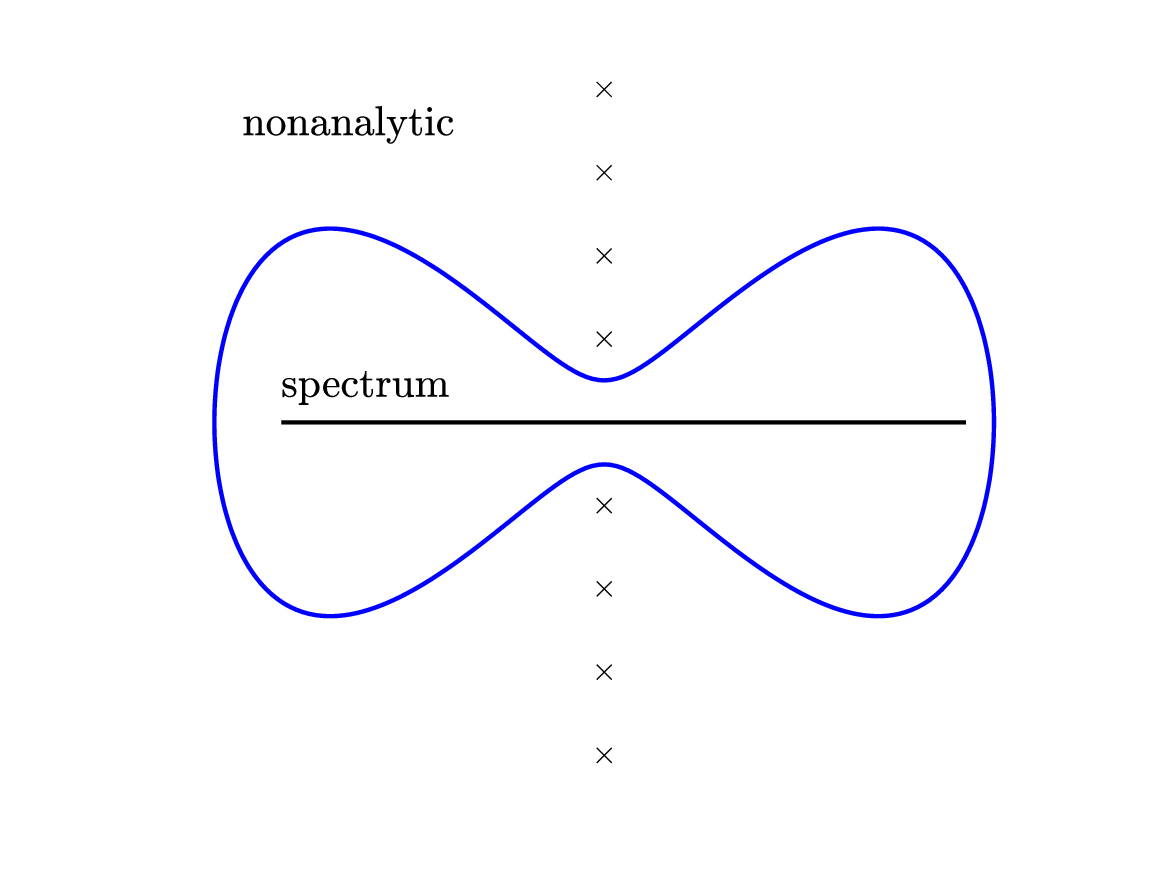}
\includegraphics[width=8.cm]{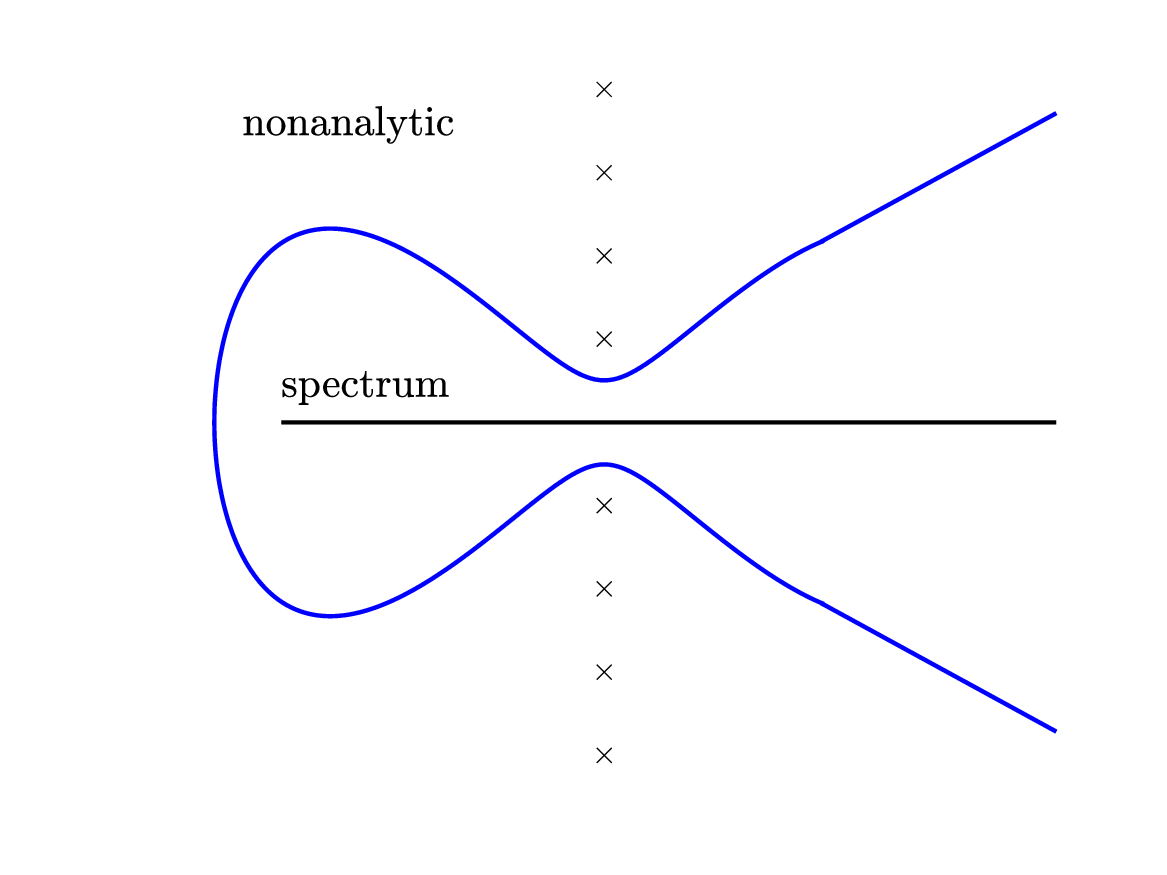}
\vskip -0.5cm
\caption{A schematic illustration of a Cauchy contour. Left: a dumbbell-shaped contour. Right: an open contour.}
\label{fig-contour}
\end{figure}

\begin{remark}
[Electron density as DoS with Fermi-Dirac function]
\label{remark:dos}
A commonly used observable related to the particle number is the Fermi-Dirac function
\begin{eqnarray}
\label{fermi-dirac-eq}
f_{\rm FD}(z) = 1/\big(1+e^{(z-\mu)/(k_B T)}\big)
\end{eqnarray}
with the chemical potential $\mu$, the Boltzmann constant $k_B$, and the electron temperature $T$.
The electron density can then be formulated by Green's function \cite{economou06, zeller08}
\begin{eqnarray}
\label{density-eq}
\rho(\vr) = \frac{1}{2\pi i}\oint_{\mathscr{C}} f_{\rm FD}(z)  G(\vr,\vr;z) \dd z
\qquad\vr\in\R^d,
\end{eqnarray}
and $\int_{\Omega_n}\rho(\vr) \dd\vr = \mathscr{D}_n(f_{\rm FD})$ denotes the number of states in $\Omega_n$.
The Fermi-Dirac function is analytic except for the poles at 
\begin{eqnarray}
\label{matsubara-def}
z_j = \mu + i\pi \big(2j-{\rm sgn}(j)\big) k_B T \qquad j\in\Z\backslash\{0\},
\end{eqnarray}
which are referred to as {\it Matsubara frequencies} in condensed-state physics \cite{wildberger1995fermi}.
\end{remark}

\begin{remark}
[``Averaged" DoS for disorded systems]
For a finite system, the ``total" DoS of the system is directly given by summing the local DoS over all cells $\Omega_n~(n \in \Lambda)$.
However, for an extended system the lattice $\Lambda$ is infinite, and a careful ``average" of the local DoS is required.
This is a difficult problem for generic extended systems, and we refer to \cite{cances2011local,cances2013mathematical,chen2018thermodynamic,chen16,ehrlacher2016analysis} for systems with local defects and \cite{cances2013mean} for disordered systems. 
\end{remark}

\subsection{Green's function by the scattering path matrix}
\label{sec:Green_MSY}

The key feature of the MST Green's function formulation is the separation of the two aspects of the problem: the scattering (chemical identity of the atoms) and the structure (positions of the atoms).
These two aspects are characterized by the ``cell scattering matrix" $t$ and the ``structure constants" $g$, respectively.
We will use these two aspects to define the scattering path matrix $\tau$, and then formulate the Green's function of the system.
These properties are usually difficult to handle computationally in the coordinate representation.
Instead, they are often transferred into matrices whose elements are labeled by indices $(\ell, m)$ of the angular momentum eigenstates.

First, the ``cell scattering matrix" represents the scattering arising from a specific cell $\Omega_n$.
Let $v_n:=V|_{\Omega_n}~(n\in\Lambda)$ be the potential restricted on each cell.
Let $\zeta_{n,\ell m}(\vr;z)$ and $\xi_{n,\ell m}(\vr;z)$ be the regular and irregular solutions, respectively, which satisfy the equation $(-\Delta + v_n + z) \hskip 0.1cm u = 0$ in $\Omega_n$ with the angular quantum number $\ell\in\N$ and $|m|\leq \ell$, but with different boundary conditions on $\partial\Omega_n$.
We refer to \ref{append:single-site} for explicit expressions of the local solution $\zeta_{n,\ell m}$ and $\xi_{n,\ell m}$ for spherical potentials (see also \cite[Appendix C]{li2023numerical} for non-spherical potentials).
Then we can define the cell scattering
\begin{eqnarray}
\label{t:cell}
t_{n,\ell m}(z) := \int_{\Omega_n} J_{\ell m}(\vr-n;z) v_n(\vr) \zeta_{n,\ell m}(\vr;z) \dd\vr 
\qquad \ell\in\N,~|m|\leq\ell ,
\end{eqnarray}
where $J_{\ell m}(\vr;z):=j_{\ell}\big(\sqrt{z}|\vr|\big) \ylm$ with $j_{\ell}$ the spherical Bessel functions and $\hat{\vr}=\vr/|\vr|$.
Note that $J_{\ell m}(\vr;z)$ solves the Helmholtz equation $(-\Delta + z) J_{\ell m}(\cdot;z)= 0$ and can be viewed as a free-electron incoming wave.
Then $t_{n,\ell m}(z)$ determines the relationship between the incoming wave and outgoing wave (scattered by the potential $v_n$ in cell $\Omega_n$) at energy $z$ with angular component $Y_{\ell m}$.
The above quantities can all be obtained ``locally" in parallel on each cell $\Omega_n$, for which the computational complexity is relatively small in the MST framework. 

Then, the ``structure constant" $g$ depends on the lattice structure $\Lambda$ and energy parameter $z$. 
For $n,n'\in\Lambda$ and $n\ne n'$, define
\begin{eqnarray}
\label{structure-constant}
g_{nn',\ell m\ell'm'}(z) := 4\pi \sum_{\ell '' m''} i^{l-l'-l''} C_{\ell m, \ell' m', \ell''m''} H_{\ell'' m''}(n'-n;z)
\end{eqnarray}
where $\displaystyle C_{\ell m, \ell' m', \ell''m''} = \int_{S^2} Y_{\ell m}(\hat{\bm{r}}) Y_{\ell' m'}(\hat{\bm{r}}) Y_{\ell''m''}(\hat{\bm{r}}) \dd\hat{\bm{r}}$ are the Gaunt coefficients (see, e.g., \cite{mavropoulos06}), and $H_{\ell m}(\vr;z): = h_{\ell}\big(\sqrt{z}|\vr|\big) \ylm$ with $h_{\ell}$ the spherical Hankel functions.
By convention, the structure constant is set to be $0$ for $n=n'$.
In particular, the structure constants $g$ gives the expansion coefficients of the Hankel functions $H_{\ell m}$ centered at site $n$ with respect to the Bessel functions $J_{\ell m}$ centered at $n'$ \cite{li2023numerical,martin05}.

Now we are able to define the key property in MST: the scattering path matrix (SPM), also known as the structural Green's function in physic literature.
Let $g(z):=\big\{g_{nn',\ell m\ell' m'}(z)\big\}_{nn',\ell m\ell' m'}$ and $t(z):=\big\{t_{n,\ell m}(z)\delta_{nn'}\delta_{\ell\ell'}\delta_{mm'}\big\}_{nn',\ell m\ell' m'}$. 
Then the SPM is defined as
\begin{eqnarray}
\label{spm}
\tau(z) := \Big( I - g(z) \hskip 0.1cm t(z) \Big)^{-1} \hskip 0.1cm g(z) 
\qquad z\in\C ,
\end{eqnarray}
where $I$ is the identity operator.
This is well defined when $I-g(z)t(z)$ is invertable, and the singularities appear when $z$ belongs to the spectrum of the Hamiltonian, i.e. when $z\in \mathfrak{d}(\ham)$ \cite{li2023numerical}.
The abstract matrix form \eqref{spm} can be rewritten in the following equation form
\begin{eqnarray}
\label{spm:alt}
\tau(z) = g(z) + g(z) \hskip 0.1cm t(z) \hskip 0.1cm \tau(z)
\end{eqnarray}
with an explicit expression for each matrix element
\begin{eqnarray}
\label{spm:matrix}
\tau_{nn',\ell\ell'mm'}(z) = g_{nn',\ell m\ell'm'}(z) + \sum_{n''\in\Lambda}\sum_{\ell''m''} g_{nn'',\ell m\ell''m''}(z) t_{n'',\ell''m''}(z) \tau_{n''n',\ell''m''\ell'm'}(z) .
\end{eqnarray}

In the MST formalism, the Green's function can be written in terms of the SPM \cite{gonis00, gonis1989multiple, zeller08, zeller95}.
For $z\in\C \backslash \mathfrak{d}(\ham)$ and $\vr\in\Omega_n,~\vr'\in\Omega_{n'}$, it is given by
\begin{align}
\label{green:angular:tau}
\nonumber
G(\vr,\vr';z) 
& = -i\delta_{nn'} \sqrt{z} \sum_{\ell m} \zeta_{n,\ell m}(\vr_{n<};z) \xi_{n,\ell m}(\vr_{n>};z) 
\\
& \quad\quad + \sum_{\ell m} \sum_{\ell' m'} \zeta_{n,\ell m}(\vr;z) \tau_{nn',\ell m\ell'm'}(z) \zeta_{n',\ell' m'}(\vr';z) 
\end{align}
with $\vr_{n>}=\arg\max\big\{|\vr-n|, |\vr'-n|\big\}$ and $\vr_{n<}=\arg\min\big\{|\vr-n|, |\vr'-n|\big\}$.
The first part of \eqref{green:angular:tau} is also referred to single-site Green's function.
With the MST expression \eqref{green:angular:tau} for Green's function and \eqref{ldos_n} for the local DoS, the local properties of a disordered system can be obtained once we have the SPM.
This Green's function formulation separates atomic potential and structure of the systems, since \eqref{spm} can be written by
$\tau(z) = \big( g(z)^{-1} - t(z) \big)^{-1}$ with $t(z)$ incorporating all the effects of the potential inside each cell $\Omega_n~(n\in\Lambda)$ and $g(z)$ depending only upon the structure of the lattice $\Lambda$. 

As the above resulting formalism, the MST method is usually given in an angular momentum representation, which has been found of great value in practical numerical applications.
However, in this work we will focus on the effects of spatial truncation of the scattering path, and the angular momentum indices $(\ell,m)$ do not essentially bring any insight to our analysis.
Therefore we will ignore the angular momentum indices in the rest of this paper for simplicity of presentations, and refer to our recent work \cite{li2023numerical} for a careful analysis on the angular momentum cutoff.
In particular, \eqref{green:angular:tau} can be rewritten as 
\begin{align}
\label{green:tau}
G(\vr,\vr';z) = -i\delta_{nn'} \sqrt{z} \zeta_{n}(\vr_{n<};z) \xi_{n}(\vr_{n>};z) 
+ \zeta_{n}(\vr;z) \tau_{nn'}(z) \zeta_{n'}(\vr';z) , 
\end{align}
where we have slightly abused the notation by writing the blocks $\zeta_{n}(\vr;z)=\big\{\zeta_{n,\ell m}(\vr;z)\big\}_{\ell m}$, $\xi_{n}(\vr;z)=\big\{\xi_{n,\ell m}(\vr;z)\big\}_{\ell m}$ and $\tau_{nn'}(z)=\big\{\tau_{nn',\ell m\ell'm'}(z)\big\}_{\ell m\ell'm'}$ for $n, n'\in\Lambda$ without angular momentum indices.
Similarly, we will also use the abbreviations $t_n(z)$ for \eqref{t:cell} and $g_{nn'}(z)$ for \eqref{structure-constant}.
From a physical perspective, each block $\tau_{nn'}(z)$ describes the propagation of the electron from site $n$ to $n'$ by traversing all possible paths across the lattice sites over the whole space.

By using \eqref{ldos_n} and \eqref{green:tau}, we see that to evaluate the local DoS in cell $\Omega_n$, one only needs the real-space Green's function for $\vr=\vr'\in\Omega_n$ and hence the diagonal element $\tau_{nn}$ of SPM.
Therefore, we will focus on numerical algorithms for the diagonal blocks of SPM, more precisely, by taking $n=n'$ on the left-hand side of \eqref{spm:matrix}.

\subsection{The perturbation perspective and choice of reference system}
\label{subsec:screened_kkr}

The MST Green's function formalism presented above can also be derived from the perturbation theory \cite{economou06, gonis00}, where the unperturbed system is chosen as free-electron gas.
Note that the Green's function $G^0$ of a free-electron gas (for $d=3$) has the analytic form 
$$
G^0(\vr,\vr';z) = - \frac{e^{i\sqrt{z}|\vr-\vr'|}}{4\pi |\vr-\vr'|},
$$
whose corresponding SPM is nothing but the structure constant $g(z)$ given in \eqref{structure-constant} \cite{gonis00, mavropoulos06}. 
From the viewpoint of perturbation theory, the Green's function of a perturbed system can be expressed in terms of the Green’s function of the reference system by using the so-called Dyson equation.
Let $G^{\rm r}$ be the Green's function of a reference system with given potential $V^{\rm r}$. Then the Green's function $G$ of a perturbed system with potential $V$ can be obtained by the Dyson equation \cite{gonis00, mavropoulos06}
\begin{eqnarray*}
G(\vr,\vr';z) = G^{\rm r}(\vr,\vr';z) + \int_{\R^d} G^{\rm r}(\vr,\vr'';z) \bigg(V(\vr'')-V^{\rm r}(\vr'')\bigg) G(\vr'',\vr';z) \dd\vr'' .
\end{eqnarray*} 
The SPM definition \eqref{spm} can be viewed as the corresponding Dyson formulation for SPM \cite{zeller08, zeller95}, by taking the free-electron gas (i.e. $V^{\rm r}(\vr)\equiv 0$) as the reference system.
We will show in the following that the SPM can be derived from the perturbation method by taking more appropriate reference systems.

Let $t^{\rm r}(z)$ be the cell scattering matrx with a reference potential $v_n^{\rm r}$ in \eqref{t:cell}.
Note that \eqref{spm} implies 
\begin{eqnarray*}
\tau(z)^{-1} - \tau^{\rm r}(z)^{-1} = t^{\rm r}(z) - t(z) =: -\Delta t(z).
\end{eqnarray*}
We have that $\Delta t(z)$ is block diagonal, whose diagonal part can be explicitly calculated with
\begin{eqnarray}
\label{t:cell:perturbation}
\Delta t_{n}(z) = \int_{\Omega_n} \zeta^{\rm r}_{n}(\vr;z) \big( v_n(\vr) - v^{\rm r}_n(\vr) \big) \zeta_{n}(\vr;z) \dd\vr 
\qquad n\in\Lambda.
\end{eqnarray}
This equality allows us to express $\tau(z)$ as a perturbed SPM through its counterpart $\tau^{\rm r}(z)$ of the unperturbed system
\begin{eqnarray}
\label{spm:ref}
\tau(z) 
= \Big(I-\tau^{\rm r}(z) \hskip 0.1cm \Delta t(z) \Big)^{-1} \tau^{\rm r}(z) .
\end{eqnarray}
This can alternatively be written by an implicit equation form
\begin{eqnarray}
\label{spm:dyson}
\tau(z) = \tau^{\rm r}(z) + \tau^{\rm r}(z) \hskip 0.1cm \Delta t(z) \hskip 0.1cm \tau(z) .
\end{eqnarray}

For the free-electron gas, we have the reference SPM $\tau^{\rm r}(z) = g(z)$ and the reference potential $v_n^{\rm r}(\vr)\equiv 0$ for all $n\in\Lambda$.
In practical calculations, more ``appropriate" reference system can be chosen for \eqref{spm:dyson} instead of the free-electron gas.
The reference potential is naturally taken to be periodic with respect to the lattice $\Lambda$, so that the reference Green's function $G^{\rm r}$ and reference SPM $\tau^{\rm r}$ are easily obtained.
In addition, we make the following assumptions 
on the choice of reference.
First, we assume that the reference SPM has a sufficiently fast off-diagonal decay.

\vskip 0.2cm 

\noindent
{\bf (A1)}
There exist constants $C^{\rm r}>0$ and $\gamma^{\rm r}>0$, such that for any $z\in\C$ satisfying \eqref{resolvant-dist},
\begin{equation}
\label{tau-assumption}
\big|\tau^{\rm r}_{jk}(z)\big| \le C^{\rm r} e^{-\gamma^{\rm r}|j-k|} \qquad \forall~j,k\in\Lambda .
\end{equation}

\vskip 0.2cm 

We will provide some intuitively discussion on why assumption {\bf (A1)} makes the criteria of a ``good" reference system.
The bottleneck of calculating the SPM is to perform the operator inversion in \eqref{spm:ref}.
If the matrix elements have sufficiently fast off-diagonal decay, then one could exploit various techniques to reduce the computational cost of matrix inversion \cite{thiess12,wang1995order,zeller08} (see also our numerical approaches in Section \ref{sec:analysis}).
Note that $\Delta t(z)$ is block diagonal, hence the off-diagonal decay of $I-\tau^{\rm r}(z)\Delta t(z)$ is determined by the reference SPM $\tau^{\rm r}(z)$.
Therefore, the exponential off-diagonal decay $\tau^{\rm r}(z)$ in \eqref{tau-assumption} could allow for efficient calculations of the SPM with \eqref{spm:ref}.
We further refer to the analysis in Section \ref{sec:analysis} and numerical experiments in Section \ref{sec:numerics} for more justifications, and also the literature \cite{mavropoulos06, zeller08, zeller95}

\begin{remark}[Principle of choosing the reference]
\label{remark:tau-decay}
For the energy parameter $z\in\C$ satisfying \eqref{resolvant-dist}, the off-diagonal decay of $\tau^{\rm r}(z)$ corresponds to the decay of the reference Green's function $G^{\rm r}(\vr,\vr',z)$ with respect to $|\vr-\vr'|$.
This is determined by the distance between $z$ and the spectrum of reference Hamiltonian $-\Delta + V^{\rm r}$ \cite{baer1997sparsity,benzi13,e11,lin16}.
If the reference system is not carefully chosen, say the free-electron gas, then the singularities $\mathfrak{s}(f)$ could force $z$ to get close to the spectrum of the reference spectrum and lead to a slow off-diagonal decay of the reference SPM.
We show in Figure \ref{fig-contour} an example of the Fermi-Dirac function \eqref{fermi-dirac-eq}, which has two singularities (i.e., the Matsubara points $z_1$ and $z_{-1}$ defined in \eqref{matsubara-def}) that are very close to the real axis at low temperature.
Ideally, one should find a reference system such that it can enhance the separation between the spectrum of reference Hamiltonian and the contour $\mathscr{C}$ satisfying \eqref{resolvant-dist}.
\end{remark}

In the following, we will provide an example that is widely used in MST community, to show that the fast off-diagonal decay condition can be fulfilled by carefully choosing a reference potential.
Let us consider an one-dimensional system, and $f\in\Sc$ to be the Fermi-Dirac function.
We take the following rectangular potential as a reference system (see the left panel of Figure \ref{fig-ref-potential} for a schematic plot)
\begin{eqnarray}
\label{ref-potential-def}
V^{\rm r}(\vr) = \left\{
\begin{array}{ll}
H \qquad & |\vr-n|\le W ~{\rm for}~n\in\Lambda, \\[1ex]
0 \qquad & {\rm otherwise}. 
\end{array}\right.
\end{eqnarray}
The potential is a constant repulsive potential within the ball centered at each site in $\Lambda$ and vanishes in the interstitial region.
We can conveniently tune this simple potential by its two parameters (the width $W$ and height $H$), such that the bottom of the reference spectrum can be raised. 
Comparing with the free-electron gas, the spectrum of this rectangular potential system is significantly further away from the singularities of $f$, and hence away from the contour in \eqref{ldos_n} (see the right panel of Figure \ref{fig-ref-potential}). 
This feature guarantees that the elements of reference SPM $\tau^{\rm r}(z)$ decay exponentially fast away from the diagonal along the whole contour.

\begin{figure}[htbp]			
\centering
\includegraphics[width=8cm]{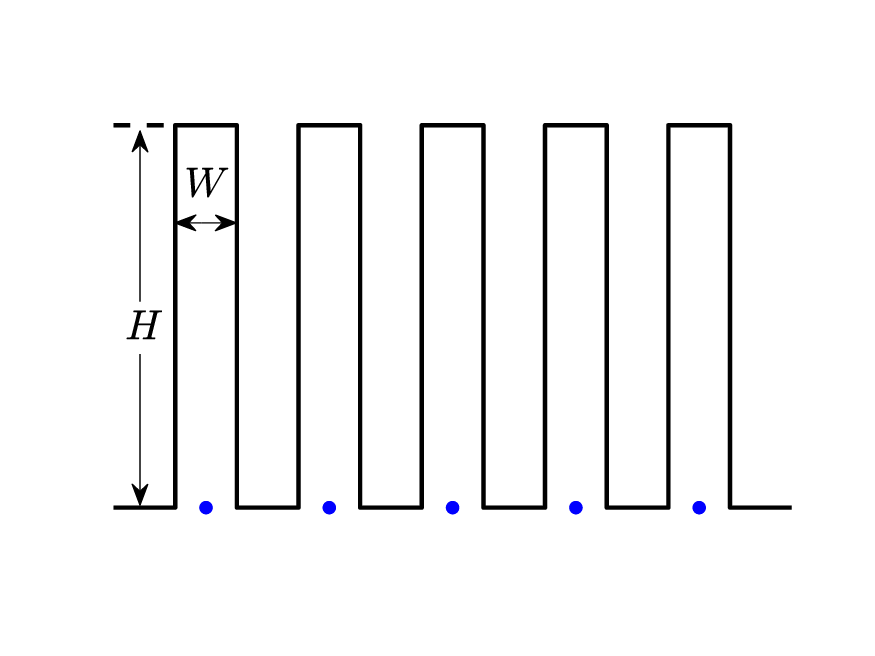}
\includegraphics[width=8cm]{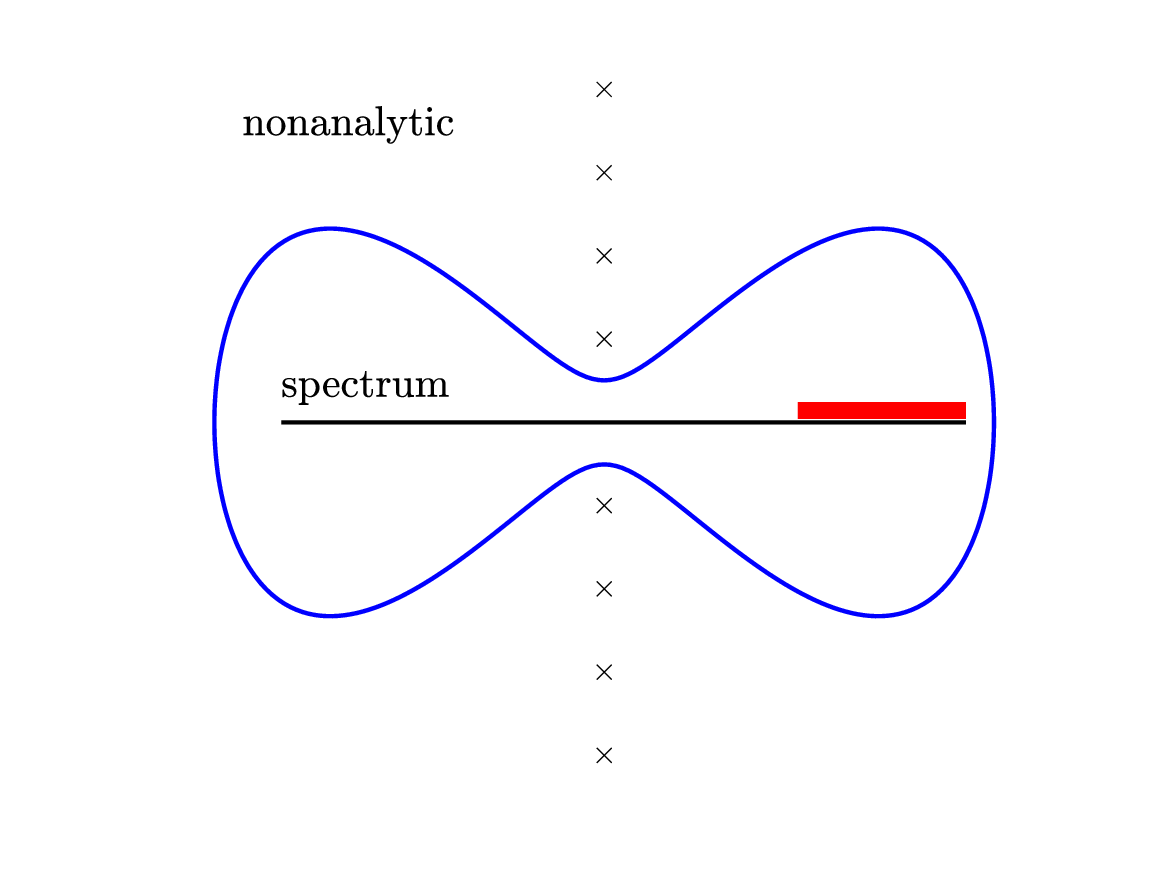}
\caption{Left: An one-dimensional schematic illustration of rectangular reference potential. 
Right: A schematic illustration of Cauchy contour, where the black and red thick lines represent the spectrum of $\mathcal{H}$ and $\mathcal{H}^{\rm r}$, respectively.}
\label{fig-ref-potential}
\end{figure}

Besides the condition on fast off-diagonal decay of the reference SPM, we will also need the following assumption on the reference to make the perturbation method \eqref{spm:ref} is well behaved.
For a subdomain $\Pi\subset\Lambda$, we introduce the notations $\tau^{{\rm r},\Pi}(z):=\big\{\tau_{nn'}^{\rm r}(z)\big\}_{n,n'\in\Pi}$ and $\Delta t^{\Pi}(z):=\big\{\Delta t_{nn'}(z)\big\}_{n,n'\in\Pi}$, that represent the restrictions of $\tau^{\rm r}(z)$ and $\Delta t(z)$ on $\Pi$ respectively.

\vskip 0.2cm

\noindent
{\bf (A2)}
For any $z\in\C$ satisfying \eqref{resolvant-dist}, the operator $I-\tau^{\rm r}(z) \hskip 0.1cm \Delta t(z)$ is invertible.
Moreover, $I-\tau^{{\rm r},\Pi}(z) \hskip 0.1cm \Delta t^{\Pi}(z)$ is invertible for any $\Pi\subset\Lambda$, and there exists a constant $\sigma>0$ such that
\begin{eqnarray}
\Big\| \Big( I-\tau^{{\rm r},\Pi}(z) \hskip 0.1cm \Delta t^{\Pi}(z) \Big)^{-1} \Big\|_2 \le \sigma 
\qquad {\rm for~any}~\Pi\subset\Lambda ~{\rm and}~ z\in\C~{\rm satisfying}~\eqref{resolvant-dist} .
\end{eqnarray} 

\vskip 0.2cm

The condition {\bf (A2)} assumes that the contour $\mathscr{C}$ used in \eqref{ldos_n} does not intersect with the spectrum of any of the subsystems under the reference background.
%
In practical calculations by the MST methods, one commonly implicitly assumes the norm of $\tau^{{\rm r}}(z)\Delta t(z)$ could be small \cite{thiess12, zeller08, zeller95} instead of {\bf (A2)}.
We write the condition in the following, which can be viewed as a stronger condition than {\bf (A2)}.

\vskip 0.2cm 

\noindent
{\bf (A3)}
There exists a constant $\kappa\in(0,1)$ such that for any $z\in\C$ satisfying \eqref{resolvant-dist}
\begin{equation}
\label{spectrum-assumption2}
\big\|\tau^{{\rm r}}(z)\Delta t(z)\big\|_{1} \leq \kappa . 
\end{equation}

Given the fast off-diagonal decay of $\tau^{{\rm r}}(z)$ in {\bf (A1)}, the condition {\bf (A3)} indicates that the perturbation on the reference system is relatively small.
This can imply a strictly column diagonal dominance of $I-\tau^{\rm r}(z) \hskip 0.1cm \Delta t(z)$, and then give rise to the condition {\bf (A2)} from the Ger\v{s}gorin circle theorem.

Based on a Combes-Thomas type estimate \cite{combes1973asymptotic}, we can provide the following lemma to show the exponential off-diagonal decay of $\Big( I-\tau^{{\rm r},\Pi}(z) \hskip 0.1cm \Delta t^{\Pi}(z) \Big)^{-1}$ for any $\Pi\subset\Lambda$, whose proof is similar to that in \cite[Lemma 6]{chen16}.
This argument is crucial to the convergence of numerical approximation for SPM in the next section.
\begin{lemma}
\label{lem:exp-decay}
Assume that the reference system satisfies the conditions {\bf (A1)} and {\bf (A2)}. 
Then there exist positive constants $\gamma$ and $C$ depending on $\sigma$ and $\gamma^{\rm r}$, such that for any $\Pi\subset\Lambda$ and $z\in \C$ satisfying \eqref{resolvant-dist},
\begin{eqnarray}
\label{inv-X-est}
\left| \Big( I-\tau^{{\rm r},\Pi}(z) \hskip 0.1cm \Delta t^{\Pi}(z) \Big)_{jk}^{-1} \right| \le Ce^{-\gamma |j-k|} 
\qquad \forall~j,k\in \Pi .
\end{eqnarray}
\end{lemma}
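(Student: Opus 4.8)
The plan is to prove \eqref{inv-X-est} by a Combes--Thomas type argument: conjugate the operator $M:=I-\tau^{{\rm r},\Pi}(z)\,\Delta t^{\Pi}(z)$ by a diagonal exponential weight, show that the conjugated operator stays boundedly invertible uniformly in $\Pi$, and then read off the exponential decay of $M^{-1}$ from the weight. Writing $K:=\tau^{{\rm r},\Pi}(z)\,\Delta t^{\Pi}(z)$ so that $M=I-K$, I would first record that $K$ inherits an exponential off-diagonal bound. Indeed, since $\Delta t(z)$ is block diagonal, so is its restriction $\Delta t^{\Pi}$, whence $K_{jk}=\tau^{\rm r}_{jk}(z)\,\Delta t_{k}(z)$ for $j,k\in\Pi$; using the uniform boundedness of the cell perturbations $\Delta t_k(z)$ together with \eqref{tau-assumption} from \textbf{(A1)} gives $|K_{jk}|\le C\,e^{-\gamma^{\rm r}|j-k|}$ for all $j,k\in\Pi$, with $C$ independent of $\Pi$.

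Next I would fix an anchor site $n_0\in\Lambda$ and introduce the diagonal operator $\mathcal{W}_\alpha$ on $\ell^2(\Pi)$ with entries $e^{\alpha|j-n_0|}$, for a parameter $\alpha\in(0,\gamma^{\rm r})$ to be chosen. The conjugate is $M_\alpha:=\mathcal{W}_\alpha M \mathcal{W}_\alpha^{-1}=I-K_\alpha$ with $(K_\alpha)_{jk}=e^{\alpha(|j-n_0|-|k-n_0|)}K_{jk}$. Combining the Lipschitz bound $\bigl||j-n_0|-|k-n_0|\bigr|\le|j-k|$ with the elementary inequality $|e^{\alpha s}-1|\le\alpha|s|e^{\alpha|s|}$ yields $|(K_\alpha-K)_{jk}|\le C\alpha\,|j-k|\,e^{-(\gamma^{\rm r}-\alpha)|j-k|}$. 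For $\alpha\le\gamma^{\rm r}/2$ the lattice sum $\sum_{k}|j-k|\,e^{-(\gamma^{\rm r}-\alpha)|j-k|}$ converges and is bounded uniformly in $j$, so a Schur test gives $\|K_\alpha-K\|_2\le\max\{\|K_\alpha-K\|_1,\|K_\alpha-K\|_\infty\}\le C'\alpha$, where $C'$ is again independent of $\Pi$ and of $n_0$.

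Then I would close the argument by a Neumann-series perturbation. Writing $M_\alpha=M\bigl(I-M^{-1}(K_\alpha-K)\bigr)$ and using the uniform bound $\|M^{-1}\|_2\le\sigma$ from \textbf{(A2)}, the choice $\alpha:=\min\{\gamma^{\rm r}/2,\,1/(2\sigma C')\}$ forces $\|M^{-1}(K_\alpha-K)\|_2\le 1/2$, so $M_\alpha$ is invertible with $\|M_\alpha^{-1}\|_2\le 2\sigma$. Undoing the conjugation gives $(M^{-1})_{jk}=e^{-\alpha|j-n_0|}(M_\alpha^{-1})_{jk}\,e^{\alpha|k-n_0|}$, and specializing the anchor to $n_0:=k$ yields $|(M^{-1})_{jk}|=e^{-\alpha|j-k|}|(M_\alpha^{-1})_{jk}|\le 2\sigma\,e^{-\alpha|j-k|}$, which is exactly \eqref{inv-X-est} with $\gamma=\alpha$ and $C=2\sigma$, both depending only on $\sigma$, $\gamma^{\rm r}$ (and the dimension through the lattice sum).

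The main obstacle I anticipate is not any single estimate but the \emph{uniformity} of every constant over all subdomains $\Pi$ and over the moving anchor $n_0=k$. The three points that make this work are: (i) the Lipschitz inequality $\bigl||j-n_0|-|k-n_0|\bigr|\le|j-k|$ holds for every anchor, so the bound $\|K_\alpha-K\|_2\le C'\alpha$ is anchor-free; (ii) restricting sums from $\Lambda$ to $\Pi$ only decreases them, so the lattice-sum constant and hence the admissible threshold for $\alpha$ are uniform in $\Pi$; and (iii) the operator-norm bound $\sigma$ of \textbf{(A2)} is assumed uniform in $\Pi$. One must also keep $\alpha<\gamma^{\rm r}$ so that the weighted kernel retains net exponential decay — this competition between the growth rate $\alpha$ of the weight and the decay rate $\gamma^{\rm r}$ of $\tau^{\rm r}$ is precisely what pins down the final rate $\gamma$.
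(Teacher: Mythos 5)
Your proposal is correct and follows essentially the same route the paper indicates for this lemma: a Combes--Thomas type conjugation by an exponential weight, a Neumann-series perturbation argument using the uniform bound $\sigma$ from \textbf{(A2)}, and the off-diagonal decay of $\tau^{\rm r}$ from \textbf{(A1)}, with all constants uniform in $\Pi$ and in the anchor site (the paper itself defers to the analogous argument in \cite[Lemma 6]{chen16}). The only point worth tightening is that for infinite $\Pi$ the weight $\mathcal{W}_\alpha$ is unbounded, so the conjugation identity should be justified by first proving the estimate on finite subdomains (where it is a finite-matrix computation) and passing to the limit using the $\Pi$-uniformity of your constants.
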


\section{Numerical approximations of SPM}
\label{sec:analysis}
\setcounter{equation}{0} \setcounter{figure}{0}

In this section, we will construct two efficient algorithms for the SPM approximations, and derive the error estimates with respect to the corresponding numerical parameters.

\subsection{Truncation of scattering region}
\label{sec:ldos}
 
As \eqref{spm:dyson} (elaborated by \eqref{spm:matrix}) involves an infinite sum of scattering sites over the lattice $\Lambda$, we will first introduce a truncation of the scattering region for SPM calculations.
In particular, \eqref{spm:dyson} can be written in an explicit matrix element form (similar to \eqref{spm:matrix}, but without the angular momentum indices)
\begin{eqnarray}
\label{spm:elment:ref}
\tau_{nn'}(z) = \tau^{\rm r}_{nn'}(z) + \sum_{n''\in\Lambda} \tau^{\rm r}_{nn''}(z) \hskip 0.1cm \Delta t_{n''}(z) \hskip 0.1cm \tau_{n''n'}(z) ,
\end{eqnarray}
in which the infinite sum over $\Lambda$ should be replaced by some finite subset in practice.
Note that to evaluate the local DoS \eqref{ldos_n} for cell $\Omega_n$, we only need to consider the diagonal block $\tau_{nn}$.

Let $R>0$ denote the size of scattering region.
For a given site $n\in\Lambda$, we define the following finite scattering region
\begin{eqnarray*}
\Lambda^{\dist}_{n} := \big\{k\in\Lambda: ~|k-n|\le \dist \big\} \subset \Lambda .
\end{eqnarray*}
Then we restrict the site indices in \eqref{spm:elment:ref} to this finite region, that is for any $n',n''\in\Lambda_n^R$, the SPM is approximated by
\begin{eqnarray}
\label{green-matrix-dyson-eq}
\tau^R_{n'n''}(z) = \tau^{{\rm r},\dist}_{n'n''}(z) + \sum_{n'''\in\Lambda_n^R} \tau^{{\rm r},\dist}_{n'n'''}(z) \hskip 0.1cm \Delta t_{n'''}(z) \hskip 0.1cm \tau^R_{n'''n''}(z).
\end{eqnarray}
Note that the scattering region $\Lambda^{\dist}_{n}$ as well as approximate SPM $\tau^R$ is designed specifically to evaluate the local DoS for cell $\Omega_n$, rather than other cells in $\Lambda^{\dist}_{n}$.
The dependency of approximate SPM $\tau^R$ on the central site $n$ is suppressed for simplicity of notations.

This truncation of the scattering region can be written in the matrix form
\begin{eqnarray}
\label{SPM:R_implicit}
\tau^{R}(z) = \tau^{{\rm r},R}(z) + \tau^{{\rm r},R}(z) \hskip 0.1cm \Delta t^{R}(z) \hskip 0.1cm \tau^{R}(z),
\end{eqnarray}
where $\Delta t^{\dist}(z):=\big\{\Delta t_{n}(z)\delta_{nn'}\big\}_{n,n'\in\Lambda^R_n}$ and $\tau^{{\rm r},R}(z):=\big\{\tau^{{\rm r},\dist}_{nn'}(z)\big\}_{n,n'\in\Lambda^R_n}$ denote the restriction of the corresponding operators in $\Lambda^{\dist}_{n}$.
Note that \eqref{SPM:R_implicit} gives an implicit equation as the SPM $\tau^R(z)$ appears on both the left- and right-hand sides of \eqref{SPM:R_implicit}.
In practical calculations, the truncated SPM can be obtained by performing the following matrix inversion
\begin{eqnarray}
\label{SPM:R}
\tau^{R}(z) = \Big(I - \tau^{{\rm r},R}(z) \hskip 0.1cm \Delta t^R(z)\Big)^{-1} \hskip 0.1cm \tau^{{\rm r},R}(z),
\end{eqnarray}
where the existence of matrix inversion is guaranteed by {\bf (A2)}.

In general, the matrix inversion amounts to solving the linear system 
\begin{eqnarray}   
\label{linear-system}
\Big( I - \tau^{{\rm r},\dist}(z) \hskip 0.1cm \Delta t^R(z) \Big) \tau^{\dist}(z) =  \tau^{{\rm r},\dist}(z),
\end{eqnarray}
which requires $\mathcal{O}(R^{3d})$ operations by a direct solution for each energy parameter $z\in\mathscr{C}$.
We point out that the computation of SPM in a bounded scattering region has already achieved linear scale concerning the size of simulation system, since the computational cost of local DoS only depends on the size of scattering region $\Lambda^{\dist}_{n}$ but not on size of the total system.

In addition, note that the coefficient matrix of \eqref{linear-system} is a non-Hermitian matrix with exponential off-diagonal decay.
There are various existing methods for solving linear systems with a sparse non-Hermitian matrix, such as generalized minimal residual method (GMRES) \cite{saad1986gmres}, biconjugate gradient method (BCG) \cite{fletcher2006conjugate}, quasi-minimal residual method (QMR) \cite{freund1991qmr}, and transpose-free quasi-minimal residual method (TFQMR) \cite{freund1993transpose}.
%

We show in the following theorem that the approximations of the diagonal block corresponding to the central cell $n$ of the SPM can converge exponentially fast with respect to the cutoff size $\dist$.

\begin{theorem}
\label{thm:spm-exp-convergence}
Assume that the reference system satisfies the conditions {\bf (A1)} and {\bf (A2)}. 
Then for any $n\in\Lambda$ and $z\in\C$ satisfying \eqref{resolvant-dist}, there exist positive constants $C$ and $\eta$ independent on $z$ and $\dist$, such that
\begin{eqnarray}
\label{tau-site-truncate-err}
\big|\tau^{\dist}_{nn}(z) - \tau_{nn}(z)\big| \le C e^{-\eta\dist} .
\end{eqnarray}
\end{theorem}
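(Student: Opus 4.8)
The plan is to reduce everything to the two closed-form expressions $\tau(z)=\big(I-\tau^{\rm r}(z)\Delta t(z)\big)^{-1}\tau^{\rm r}(z)$ from \eqref{spm:ref} and $\tau^{R}(z)=\big(I-\tau^{{\rm r},R}(z)\Delta t^{R}(z)\big)^{-1}\tau^{{\rm r},R}(z)$ from \eqref{SPM:R}, and then to exploit the exponential off-diagonal decay furnished by Lemma \ref{lem:exp-decay}. Write $X(z):=I-\tau^{\rm r}(z)\Delta t(z)$ and $Y(z):=X(z)^{-1}$ on the full lattice, so that $\tau=Y\tau^{\rm r}$ and $\tau_{nn}=\sum_{k\in\Lambda}Y_{nk}\tau^{\rm r}_{kn}$; likewise $\tau^{R}_{nn}=\sum_{k\in\Lambda^{R}_n}Y^{R}_{nk}\tau^{\rm r}_{kn}$ with $Y^{R}:=(X^{R})^{-1}$ and $X^{R}:=I-\tau^{{\rm r},R}\Delta t^{R}$. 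Since $n$ is the centre of $\Lambda^{R}_n$, the target difference splits as
\begin{equation*}
\tau_{nn}(z)-\tau^{R}_{nn}(z)=\sum_{k\in\Lambda^{R}_n}\big(Y_{nk}-Y^{R}_{nk}\big)\tau^{\rm r}_{kn}+\sum_{k\in\Lambda\setminus\Lambda^{R}_n}Y_{nk}\tau^{\rm r}_{kn},
\end{equation*}
an \emph{interior mismatch} term plus a \emph{tail} term, and I would bound each by $Ce^{-\eta R}$.

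The tail term is immediate: Lemma \ref{lem:exp-decay} with $\Pi=\Lambda$ gives $|Y_{nk}|\le Ce^{-\gamma|n-k|}$, and {\bf (A1)} gives $|\tau^{\rm r}_{kn}|\le C^{\rm r}e^{-\gamma^{\rm r}|n-k|}$; because every $k\in\Lambda\setminus\Lambda^{R}_n$ satisfies $|n-k|>R$, summing the product $e^{-(\gamma+\gamma^{\rm r})|n-k|}$ over such $k$ — a sum whose lattice-point count at radius $r$ grows only polynomially — yields $Ce^{-\eta R}$.

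For the interior mismatch I would invoke the Schur-complement block-inversion identity. Decompose $\Lambda=\mathcal{I}\cup\mathcal{E}$ with $\mathcal{I}=\Lambda^{R}_n$ and $\mathcal{E}=\Lambda\setminus\Lambda^{R}_n$, and write $X$ in $2\times2$ block form; since $\Delta t$ is block diagonal, $X^{R}$ coincides with the block $X_{\mathcal{I}\mathcal{I}}$, while the $\mathcal{I}\mathcal{I}$ block of $Y=X^{-1}$ equals $\big(X_{\mathcal{I}\mathcal{I}}-X_{\mathcal{I}\mathcal{E}}X_{\mathcal{E}\mathcal{E}}^{-1}X_{\mathcal{E}\mathcal{I}}\big)^{-1}$ (both $X$ and $X_{\mathcal{E}\mathcal{E}}=I-\tau^{{\rm r},\mathcal{E}}\Delta t^{\mathcal{E}}$ being invertible by {\bf (A2)}). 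The resolvent identity then gives
\begin{equation*}
Y_{\mathcal{I}\mathcal{I}}-Y^{R}=Y_{\mathcal{I}\mathcal{I}}\,X_{\mathcal{I}\mathcal{E}}\,X_{\mathcal{E}\mathcal{E}}^{-1}\,X_{\mathcal{E}\mathcal{I}}\,(X^{R})^{-1}.
\end{equation*}
Here the coupling blocks $X_{\mathcal{I}\mathcal{E}},X_{\mathcal{E}\mathcal{I}}$ have entries $-\tau^{\rm r}_{ab}\Delta t_b$ decaying like $e^{-\gamma^{\rm r}|a-b|}$ with uniformly bounded cell perturbations $\Delta t_b$, while $Y_{\mathcal{I}\mathcal{I}}=Y|_{\mathcal{I}\times\mathcal{I}}$, $X_{\mathcal{E}\mathcal{E}}^{-1}$ and $(X^{R})^{-1}=Y^{R}$ all decay exponentially by Lemma \ref{lem:exp-decay} applied to $\Pi=\Lambda,\mathcal{E},\mathcal{I}$ respectively. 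Expanding the product as a sum over a five-site path $n\to a\to b\to c\to d\to k$ with $a,d,k\in\mathcal{I}$ and $b,c\in\mathcal{E}$, each hop contributes an exponential factor, and the forced excursion into $\mathcal{E}$ (any such path must travel from the centre $n$ out past radius $R$, so $|n-a|+|a-b|\ge|n-b|>R$) extracts a factor $e^{-\eta R}$, while the residual exponential weights keep all intermediate lattice sums finite.

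The main obstacle is exactly this last bookkeeping: one must split each hop's decay rate so that a fixed fraction certifies the $e^{-\eta R}$ boundary-crossing factor and the remaining fraction guarantees that the multi-index sums over $a,b,c,d$ (and the outer sum over $k$) converge \emph{uniformly} in $R$ and $z$, the $z$-uniformity resting on the $z$-independence of the constants in {\bf (A1)}, {\bf (A2)} and Lemma \ref{lem:exp-decay}. Once the geometric summation is organized this way, combining the interior and tail bounds yields \eqref{tau-site-truncate-err} with $\eta$ a fixed fraction of $\min(\gamma,\gamma^{\rm r})$ (up to the usual loss absorbing the polynomial lattice-point growth).
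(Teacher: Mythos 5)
Your proposal is correct and follows essentially the same route as the paper: the same Schur-complement block-inversion identity for $I-\tau^{\rm r}\Delta t$ (using that $\Delta t$ is block diagonal so the truncated matrix is exactly the $\mathcal{I}\mathcal{I}$ block), the same decay input from Lemma \ref{lem:exp-decay} and {\bf (A1)}, and the same ``interior mismatch plus tail'' split with the forced excursion past radius $R$ extracting the factor $e^{-\eta R}$. The only cosmetic difference is that you compare the truncation directly against the infinite-lattice inverse, whereas the paper compares two finite truncations $R_1<R_2$ and lets $R_2\to\infty$.
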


\begin{proof}
Denote $\displaystyle M^{\dist}(z)=I-\tau^{{\rm r},\dist}(z) \hskip 0.1cm \Delta t^{\dist}(z)$ for $R>0$.
For $n\in\Lambda$ and $\dist_2>\dist_1$, let $N_1=\# \Lambda_n^{\dist_1}=\mathcal{O}(R_1^d)$ and $N_2=\# \Lambda_n^{\dist_2}=\mathcal{O}(R_2^d)$.
Since $\Delta t^{\dist}(z)$ is block diagonal, we can divide $M^{\dist_2}(z)$ into the following form with the appropriate relabeling
\begin{eqnarray}\nonumber
M^{\dist_2}(z) = :\left(
\begin{matrix}
M^{\dist_1}(z) & B(z) \\
D(z) & E(z)  \\
\end{matrix}
\right)
\end{eqnarray}
with $B(z)\in\mathbb{C}^{N_1\times(N_2-N_1)}$, $D(z)\in\mathbb{C}^{(N_2-N_1)\times N_1}$ and $E(z)\in\mathbb{C}^{(N_2-N_1)\times(N_2-N_1)}$.
Then we derive that
\begin{align}\nonumber
&\Big(M^{\dist_2}(z)\Big)^{-1} - \left(
\begin{matrix}
\Big(M^{\dist_1}(z)\Big)^{-1} & O \\
O & O  \\
\end{matrix} 
\right) \\[1ex]\nonumber
= & 
\left(
\begin{matrix}
M^{\dist_1}(z) & B(z) \\
D(z) & E(z)  \\
\end{matrix}
\right)^{-1} - 
\left(
\begin{matrix}
\Big(M^{\dist_1}(z)\Big)^{-1} & O \\
O & O  \\
\end{matrix}
\right)	\\[1ex]\nonumber
= & \left(
\begin{matrix}
\Big(M^{\dist_1}(z)-B(z) E^{-1}(z) D(z)\Big)^{-1} - \Big(M^{\dist_1}(z)\Big)^{-1} & * \\
* & *  \\
\end{matrix}
\right) \\[1ex]\nonumber
= & 	
\left(
\begin{matrix}
\Big(M^{\dist_1}(z)-B(z) E^{-1}(z) D(z)\Big)^{-1} B(z) E^{-1}(z) D(z)\Big(M^{\dist_1}(z)\Big)^{-1} & * \\
* & *  \\
\end{matrix}
\right).
\end{align}
Let $X^R(z)=\Big(M^R(z)\Big)^{-1}$. 
For $\alpha,\beta\in\Lambda^{R_1}_n$, it gives rise to 
\begin{eqnarray}
\label{inv-err-eq-1}
|X^{\dist_2}_{\alpha\beta}(z)-X^{\dist_1}_{\alpha\beta}(z)| \le \sum_{j,k\in\Lambda_n^{R_1}} |X^{\dist_2}_{\alpha j}(z)| \cdot \left|\Big(B(z) E^{-1} (z) D(z)\Big)_{jk}\right| \cdot |X^{\dist_1}_{k\beta}(z)|.
\end{eqnarray}
From Lemma \ref{lem:exp-decay}, there exist positive constants $\gamma\in(0,\gamma^{\rm r})$ and $C$ independent on $z$, $R_1$ and $R_2$ such that
\begin{eqnarray}
\max\Big\{ |X^{R_1}_{jk}(z)|, |X^{R_2}_{jk}(z)|, |E_{jk}^{-1}(z)| \Big\} \le Ce^{-\gamma |j-k|}.
\end{eqnarray}
We then estimate that for any $j,k\in\Lambda_n^{R_1}$
\begin{multline*}
\left|\Big(B(z) E^{-1}(z) D(z)\Big)_{jk}\right| 
\le C \sum_{p,q\in\Lambda_n^{R_2}\backslash\Lambda_n^{R_1}} e^{-\gamma^{\rm r} |j-p|} e^{-\gamma |p-q|} e^{-\gamma^{\rm r} |q-k|} \\[1ex]
\le C \sum_{q\in\Lambda_n^{R_2}\backslash\Lambda_n^{R_1}} e^{-\tilde{\gamma} |j-q|} e^{-\gamma^{\rm r} |q-k|}
\le C e^{-\tilde{\gamma} (R_1-|j-n|)} \sum_{q\in\Lambda_n^{R_2}\backslash\Lambda_n^{R_1}}  e^{-\gamma^{\rm r} |q-k|}
\le C e^{-\tilde{\gamma} (2R_1-|j-n| - |k-n|)},
\end{multline*}
where $\tilde{\gamma}\in(0,\gamma)$ and $C$ do not depend on $z$, $\dist_1$ and $\dist_2$.
Combining this with \eqref{inv-err-eq-1} and \eqref{inv-X-est}, we have
\begin{align}
\label{X-err-eq}
\nonumber
|X_{\alpha \beta}^{\dist_2}(z)-X_{\alpha \beta}^{\dist_1}(z)|  
& \le C \sum_{j,k\in\Lambda_n^{R_1}} e^{-\gamma |j-\alpha|} e^{-\tilde{\gamma} (2R_1-|j-n| - |k-n|)} e^{-\gamma |k-\beta|} 
\\[1ex]
& \le C N_1^2 e^{-\tilde{\gamma}(2 R_1-|\alpha-n|-|\beta-n|)}.
\end{align}
By using the assumption ${\bf (A1)}$,  \eqref{inv-X-est} and \eqref{X-err-eq}, the approximation error for the SPM can thus be estimated by
\begin{align*}
|\tau^{\dist_1}_{nn}(z)-\tau^{\dist_2}_{nn}(z)| 
& =  \left| \sum_{\beta\in\Lambda_n^{R_2}} X_{n\beta}^{\dist_2}(z)  \tau^{{\rm r},R_2}_{\beta n}(z) - \sum_{\beta\in\Lambda_n^{R_1}} X_{n\beta}^{\dist_1}(z)  \tau^{{\rm r}, R_1}_{\beta n}(z) \right| 
\\[1ex]\nonumber
& \le  \sum_{\beta\in\Lambda_n^{R_1}} \left| X_{n\beta}^{\dist_2}(z) - X_{n\beta}^{\dist_1}(z) \right|  \left| \tau^{{\rm r},R_1}_{\beta n}(z) \right| 
+ \sum_{\beta\in\Lambda_n^{R_2} / \Lambda_n^{R_1}} \left| X_{n\beta}^{\dist_2}(z)  \tau^{{\rm r}, R_2}_{\beta n}(z) \right| 
\\[1ex]
& \le C N_1^2 e^{-\tilde{\gamma} R_1} \sum_{\beta\in\Lambda_n^{R_1}} \left| \tau^{{\rm r},R_1}_{\beta n}(z) \right| 
+ C e^{-\gamma R_1} \sum_{\beta\in\Lambda_n^{R_2} / \Lambda_n^{R_1}} \left|  \tau^{{\rm r}, R_2}_{\beta n}(z) \right| 
\\[1ex]
& \le  C e^{-\eta R_1},
\end{align*}
where $C$ and $\eta$ depend on $C^{\rm r}$, $\gamma^{\rm r}$, $\sigma $ and $d$, but not on $z$, $\dist_1$ and $\dist_2$.
%
By taking $R_2\rightarrow\infty$, we can obtain the exponential convergence \eqref{tau-site-truncate-err}.
\end{proof}

Using the SPM approximation \eqref{green-matrix-dyson-eq}, the local DoS in cell $\Omega_n$ can be approximated by the corresponding truncation of $\Lambda$ as
\begin{eqnarray}
\nonumber
\label{approx-ldos_n}
\mathscr{D}^{\dist}_n(f) := \frac{1}{2\pi i} \oint_{\mathscr{C}} f(z) \int_{\Omega_n} G^R(\vr,\vr;z) \dd\vr \dd z,
\end{eqnarray}
where $G^R$ is the corresponding truncated Green's function
\begin{align*}
G^R(\vr,\vr;z) = -i\delta_{nn'} \sqrt{z} \zeta_{n}(\vr,z) \xi_{n}(\vr,z) 
+ \zeta_{n}(\vr;z) \hskip 0.1cm\tau^R_{nn}(z) \hskip 0.1cm \zeta_{n}(\vr;z) .
\end{align*}
Note that the local quantities $\zeta_n(\vr;z)$ and $\xi_{n}(\vr,z)$ are obtained locally in cell $\Omega_n$, and hence will not be affected by the scattering region truncation.
Therefore, the accuracy of the local DoS approximation is completely determined by that of SPM.
We can immediately obtain from Theorem \ref{thm:spm-exp-convergence} that, for $f\in \Sc$ and $n\in\Lambda$,
\begin{eqnarray}
\label{err-site-eq}
\big|\ldos^{\dist}(f)-\ldos(f)\big| 
\le \frac{1}{2\pi} \oint_{\mathscr{C}} |f(z)| \left| \tau^{\dist}_{nn}(z)-\tau_{nn}(z) \right| \int_{\Omega_n} |\zeta_{n}(\vr;z)|^2  \dd\vr \dd z \le C e^{-\eta R}.
\end{eqnarray}

\begin{remark}
[Approximation of the electron density]
\label{remark:converge:density}
A direct application of the approximation \eqref{SPM:R_implicit} is the ground state electron density calculation \eqref{density-eq}.
In cell $\Omega_n~(n\in\Lambda)$, the electron density $\rho(\vr)$ can be approximated with the scattering region cutoff $\Lambda_n^R$
\begin{eqnarray*}
\rho_n^R(\vr) := \frac{1}{2\pi i}\oint_{\mathscr{C}} f_{\rm FD}(z)  G^R(\vr,\vr;z) \dd z \qquad\vr\in\Omega_n.
\end{eqnarray*}
By \eqref{err-site-eq}, we have the following exponential convergence rate $\big\|\rho_n^R-\rho\big\|_{L^1(\Omega_n)} \leq C e^{-\eta R}$.
\end{remark}

\subsection{An iterative scheme}
\label{sec:linear-scaling}

The bottleneck of evaluating the local DoS in cell $\Omega_n$ is to compute the diagonal entry $\tau^R_{nn}(z)$ of the approximate SPM by solving the linear system \eqref{linear-system}.
%
%
In the community of numerical linear algebra, there are many technique that can reduce the computational cost of matrix inversion when the matrix process some kind of sparsity, which have also appeared in electronic structure calculations \cite{benzi13,chen16,lin16,lin2011fast,lin2011selinv,thiess12,zeller08}. 
Following similar ideas, we will discuss how to improve the efficiency of the matrix inversion by exploiting the sparsity of the SPM. 
In the rest of this section, we will develop a fixed-point iteration scheme to solve the linear system, in which the length of scattering path is further truncated by exploiting the matrix sparsity.

Assume that at the $\nu$-th step ($\nu\in\N$), we have $\tau^R_{(\nu)}(z)$ as a guess of the SPM, then the following fixed-point iteration scheme can be derived directly from \eqref{SPM:R_implicit}
\begin{eqnarray}
\label{SPM:iteration}
\tau^{\dist}_{(\nu+1)}(z) = \tau^{{\rm r},R}(z) +  \tau^{{\rm r},R}(z) \hskip 0.1cm \Delta t^{R}(z) \hskip 0.1cm \tau^{\dist}_{(\nu)}(z) .
\end{eqnarray}
Note that we can efficiently obtain the diagonal entry by considering only the $n$-th column of $\tau^R(z)$, as each column in the iterative scheme can be iterated alone.
From the assumption {\bf (A3)} and the fixed-point theorem, we have
\begin{eqnarray}
\label{iter-converge-eq}
\|\tau^{\dist}(z) - \tau_{(\nu)}^{\dist}(z)\|_{1} \le \| \tau^{{\rm r},R}(z) \Delta T^{R}(z) \|_{1}^{\nu} \|\tau^{\dist}(z) - \tau_{(0)}^{\dist}(z)\|_{1} \le \kappa^{\nu} \|\tau^{\dist}(z) - \tau_{(0)}^{\dist}(z)\|_{1}, 
\end{eqnarray}
with $\kappa$ given in \eqref{spectrum-assumption2}.
Consequently, $\tau_{(\nu)}^{\dist}(z)$ can converge to $\tau^{\dist}(z)$ as $\nu\rightarrow \infty$.
In practical calculations, we will take the maximum iteration time $N_{\rm i}$ sufficiently large to ensure the convergence, and choose the initial guess of SPM randomly or by taking $\tau_{(0)}^{R}(z)=\tau^{{\rm r},\dist}(z)$ (that corresponds to the first term of Born series, see the following remark).

\begin{remark}
[Born series expansion]
\label{remark:born}
Under the assumption {\bf (A3)}, the approximate SPM $\tau^R(z)$ can be described by the Born series \cite{gonis00,mavropoulos06} via the expansion of \eqref{SPM:R_implicit}
\begin{align}
\label{born-eq}
\nonumber
\tau^{\dist}_{jk}(z) = \tau^{{\rm r}, R}_{jk}(z)
& + \sum_{p\in\Lambda^{\dist}_n}  \tau^{{\rm r},R}_{jp}(z) \Delta t_{p}(z) \tau^{{\rm r},R}_{pk}(z) 
\\[1ex]
& + \sum_{p,q\in\Lambda^{\dist}_n} \tau^{{\rm r},R}_{jp}(z) \Delta t_{p}(z) \tau^{{\rm r},R}_{pq}(z) \Delta t_{q}(z) \tau^{{\rm r},R}_{qk}(z) + \cdots 
\qquad\forall~j,k\in\Lambda_n^R.
\end{align}
From a physical point of view, \eqref{born-eq} characterizes the multiple scattering behaviors of electron from site $j$ to $k$, where $\Delta t_{k}(z)$ and $\tau^{{\rm r},\dist}_{jk}(z)$ can be interpreted as local scattering event at site $k$ and electron traveling under the reference background from site $j$ to $k$, respectively.
Note that assumption {\bf (A3)} indicates the amplitude of each scattering event, combined with an electron traveling under the reference background, to be small.
Hence, it is natural to truncate the ``times" of scattering event by a finite number,
which corresponds to the truncation of the Born series \eqref{born-eq}.
In fact, it can be seen that the $\Nit$-term truncation of Born series is essentially equivalent to the fixed-point iteration scheme \eqref{SPM:iteration}, by choosing the initial guess $\tau^R_{(0)}(z)=\tau^{{\rm r}, R}(z)$. 
\end{remark}

We then exploit the exponential off-diagonal decay of $\tau^{{\rm r},R}(z)$ to make further approximation of the iterative scheme \eqref{SPM:iteration}.  
Let $\Ntr\in(0,R)$ denote the parameter for truncating the length of reference SPM path.
We approximate $\tau^{{\rm r},R}(z)$ by
\begin{eqnarray}
\label{numer-truncation-sparse-def}
\tau^{{\rm r},R\Ntr}_{jk}(z) := \left\{
\begin{array}{ll}
\tau^{{\rm r},R}_{jk}(z) &\qquad |j-k|\le \Ntr, \\[1ex]
0 &\qquad {\rm otherwise}.
\end{array}\right.
\end{eqnarray} 
Then the iterative scheme \eqref{SPM:iteration} can be approximated by
\begin{eqnarray}
\label{trun-iter-eq}
\tau^{\dist\Ntr}_{(\nu+1)}(z) = \tau^{{\rm r},R\Ntr}(z) + \tau^{{\rm r},R\Ntr}(z) \hskip 0.1cm \Delta t^{R}(z) \hskip 0.1cm \tau^{R \Ntr}_{(\nu)}(z) \qquad 0\leq \nu\leq \Nit-1,
\end{eqnarray}
where the initial vector is chosen by  $\tau_{(0)}^{R\Ntr}(z) = \tau_{(0)}^{R}(z)$.
Note that the fixed-point iteration \eqref{trun-iter-eq} can also converge since the following estimate is satisfied
$$
\| \tau^{{\rm r}, R\Ntr}(z) \Delta t^{\dist}(z) \|_{1} \le  \| \tau^{{\rm r},R}(z) \Delta t^{\dist}(z) \|_{1} \le \kappa .
$$

We provide some intuitive examples to interpret the iterative scheme \eqref{trun-iter-eq} and to explain how the truncation by $L$ can correspond to a cutoff of the scattering path.
We show five different scattering paths in Figure \ref{fig-scattering-path}, which correspond to the terms in Born series in Remark \ref{remark:born}.
There are three features required in the iterative scheme \eqref{trun-iter-eq}: (a) all the scattering paths are confined in the bounded scattering region $\Lambda^{\dist}_n$ centered at site $n$; (b) each path traverses fewer than $\Nit$ sites, starting from site $n$ and finally returning to $n$; and (c) the length of each segment of the paths are less than the truncated length of scattering path $\Ntr$.
For the given numerical parameters, the first three paths in the figure are included in \eqref{trun-iter-eq} (or \eqref{born-eq}), while the other two are excluded.

\begin{figure}[htbp]
	\centering
	\includegraphics[width=10.cm]{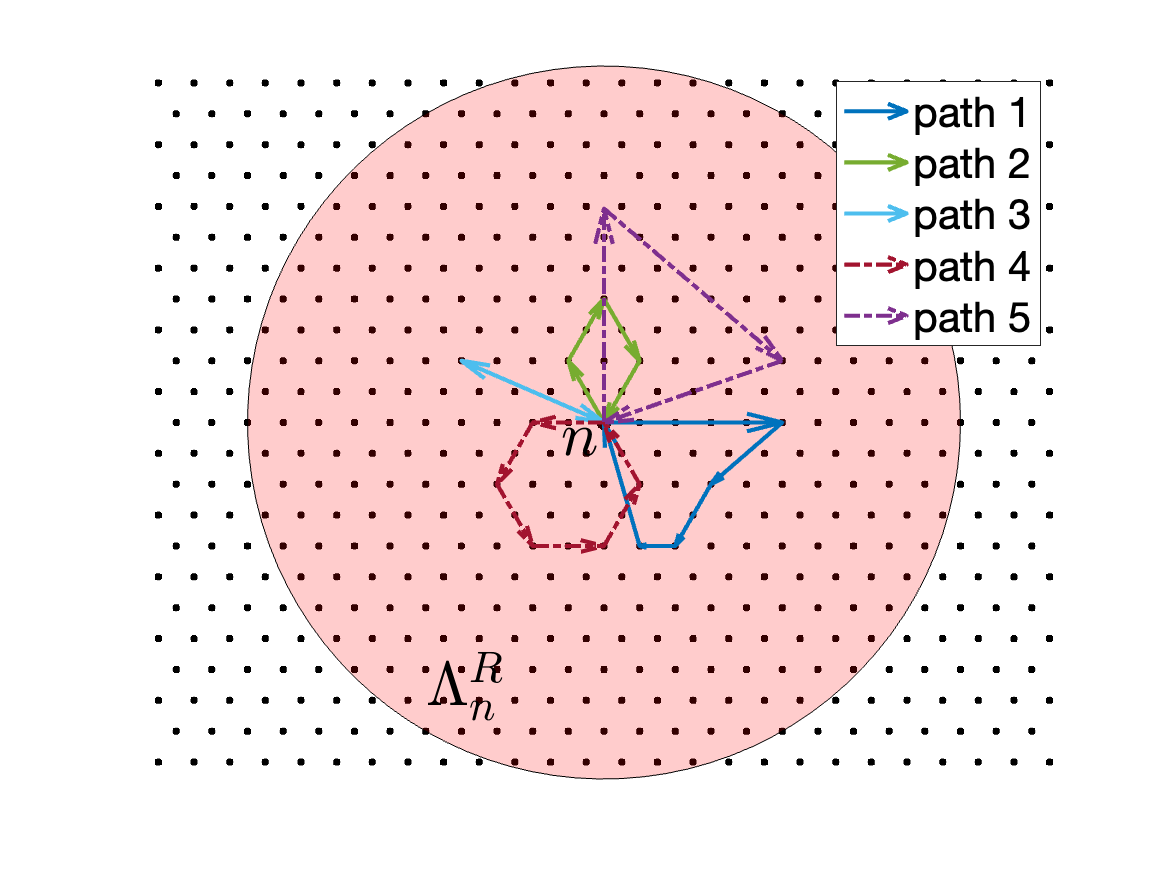}
	\vskip -0.5cm
	\caption{A schematic illustration of the  scattering paths when $R=10, \Nit=5, \Ntr=5$. Path 1-3 satisfy (a)-(c), while path 4 and path 5 do not satisfy (b) and (c), respectively.}
	\label{fig-scattering-path}
\end{figure}

Compared to a direct inversion with $\mathcal{O}(R^{3d})$ operations, the iterative scheme \eqref{trun-iter-eq} reduces the cost to $\mathcal{O}(N_{\rm i} L^d \dist^d)$. 
Moreover, we can choose a sparse initial guess $\tau_{(0)}^{\dist}(z)$ to ensure that $\tau_{(\nu)}^{\dist\Ntr}(z)$ remains sparse as well.
This results in a computational effort of $\mathcal{O}(N^2_{\rm i} L^{2d})$, which is independent on the size of the scattering region $R$.
The following theorem provides approximate errors of the SPM with respect to the numerical parameters $\dist$, $\Ntr$ and $\Nit$.

\begin{theorem}
\label{thm:iter-convergence}
Assume that the reference system satisfies the conditions {\bf (A1)} and {\bf (A3)}. 
Then for any $n\in\Lambda$ and $z\in\C$ satisfying \eqref{resolvant-dist}, there exist positive constants $C$ and $\eta$ that are independent on $z$, $\dist$, $\Nit$ and $\Ntr$, such that
\begin{align}
\label{convergence:spm:path}
\left| \Big(\tau^{\dist\Ntr}_{(\Nit)}(z)\Big)_{nn}-\tau_{nn}(z) \right| 
\leq C \left( e^{-\eta\dist} + \kappa^{N_{\rm i}} + \frac{1-\kappa^{\Nit}}{1-\kappa}e^{-\gamma^{\rm r} \Ntr} \right) .
\end{align}
\end{theorem}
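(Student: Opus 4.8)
The plan is to split the total error by inserting the two natural intermediate quantities: the full (untruncated-path) iteration $\tau^{\dist}_{(\Nit)}(z)$ from \eqref{SPM:iteration} run for $\Nit$ steps, and its fixed point $\tau^{\dist}(z)$ (which is precisely the scattering-region truncation \eqref{SPM:R}). A triangle inequality then gives
\begin{align*}
\left|\Big(\tau^{\dist\Ntr}_{(\Nit)}\Big)_{nn}-\tau_{nn}\right|
\le
\underbrace{\left|\Big(\tau^{\dist\Ntr}_{(\Nit)}\Big)_{nn}-\Big(\tau^{\dist}_{(\Nit)}\Big)_{nn}\right|}_{\text{(I) path truncation}}
+\underbrace{\left|\Big(\tau^{\dist}_{(\Nit)}\Big)_{nn}-\tau^{\dist}_{nn}\right|}_{\text{(II) iteration count}}
+\underbrace{\left|\tau^{\dist}_{nn}-\tau_{nn}\right|}_{\text{(III) region truncation}} .
\end{align*}
Since {\bf (A3)} implies {\bf (A2)}, Theorem \ref{thm:spm-exp-convergence} applies and bounds term (III) by $Ce^{-\eta\dist}$. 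Term (II) is the geometric convergence of the fixed-point iteration: bounding the single entry by the induced $1$-norm and invoking the contraction estimate \eqref{iter-converge-eq} gives $|(\tau^{\dist}_{(\Nit)})_{nn}-\tau^{\dist}_{nn}|\le\|\tau^{\dist}_{(\Nit)}-\tau^{\dist}\|_1\le\kappa^{\Nit}\|\tau^{\dist}_{(0)}-\tau^{\dist}\|_1\le C\kappa^{\Nit}$, where the last step uses the uniform-in-$\dist$ boundedness of $\|\tau^{{\rm r},\dist}\|_1$ (obtained by summing {\bf (A1)} over the lattice), hence of $\tau^{\dist}$ and of the initial guess.

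Term (I), the path-truncation error, is the main work. The idea is to compare the two recursions \eqref{SPM:iteration} and \eqref{trun-iter-eq}, which share the same initial guess, and to track how the perturbation $\Delta\tau^{\rm r}(z):=\tau^{{\rm r},\dist}(z)-\tau^{{\rm r},\dist\Ntr}(z)$ propagates through the iterates. Setting $E_\nu:=\tau^{\dist}_{(\nu)}(z)-\tau^{\dist\Ntr}_{(\nu)}(z)$ and subtracting the two schemes, then adding and subtracting $\tau^{{\rm r},\dist\Ntr}\,\Delta t^{\dist}\,\tau^{\dist}_{(\nu)}$, one obtains
\begin{align*}
E_{\nu+1}=\Delta\tau^{\rm r}+\Delta\tau^{\rm r}\,\Delta t^{\dist}\,\tau^{\dist}_{(\nu)}+\tau^{{\rm r},\dist\Ntr}\,\Delta t^{\dist}\,E_\nu .
\end{align*}
The crucial quantitative input is that the off-diagonal tail of the reference SPM beyond distance $\Ntr$ is exponentially small: summing the decay bound {\bf (A1)} over $\{|j-k|>\Ntr\}$ yields $\|\Delta\tau^{\rm r}\|_1\le Ce^{-\gamma^{\rm r}\Ntr}$ (polynomial shell factors are absorbed into $C$). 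Combined with the uniform bounds $\|\tau^{\dist}_{(\nu)}\|_1\le C_\tau$ (proved by induction on \eqref{SPM:iteration} using $\|\tau^{{\rm r},\dist}\Delta t^{\dist}\|_1\le\kappa<1$) and $\|\Delta t^{\dist}\|_1\le C_t$ (block-diagonal and uniformly bounded on the contour), together with the contraction $\|\tau^{{\rm r},\dist\Ntr}\Delta t^{\dist}\|_1\le\kappa$ noted after \eqref{trun-iter-eq}, taking $\|\cdot\|_1$ produces the scalar recursion
\begin{align*}
\|E_{\nu+1}\|_1\le a+\kappa\|E_\nu\|_1,\qquad a:=Ce^{-\gamma^{\rm r}\Ntr}.
\end{align*}

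Because the shared initial guess forces $E_0=0$, induction gives $\|E_\nu\|_1\le a\,(1+\kappa+\cdots+\kappa^{\nu-1})=a\,\frac{1-\kappa^\nu}{1-\kappa}$, so at $\nu=\Nit$ the diagonal entry satisfies $|(E_{\Nit})_{nn}|\le\|E_{\Nit}\|_1\le C\frac{1-\kappa^{\Nit}}{1-\kappa}e^{-\gamma^{\rm r}\Ntr}$, which is exactly term (I). Adding the three bounds and taking $\eta$ no larger than the region-truncation rate of Theorem \ref{thm:spm-exp-convergence} yields \eqref{convergence:spm:path}. The main obstacle is term (I): one must arrange the coupled error recursion so that the small factor $e^{-\gamma^{\rm r}\Ntr}$ always multiplies a \emph{uniformly bounded} iterate, while the accumulation over the $\Nit$ iterations is damped by the contraction factor $\kappa$, producing precisely the partial geometric sum $\frac{1-\kappa^{\Nit}}{1-\kappa}$ rather than a factor growing with $\Nit$. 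This hinges on the uniform-in-$\nu$ boundedness of the iterates and on careful, repeated use of the submultiplicativity of $\|\cdot\|_1$.
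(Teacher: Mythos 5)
Your proposal is correct and follows essentially the same route as the paper: the same three-way error splitting (region truncation via Theorem \ref{thm:spm-exp-convergence}, iteration count via \eqref{iter-converge-eq}, and path truncation via the error recursion between \eqref{SPM:iteration} and \eqref{trun-iter-eq}), with the identical recursion $E_{\nu+1}=\Delta\tau^{\rm r}(I+\Delta t^{\dist}\tau^{\dist}_{(\nu)})+\tau^{{\rm r},\dist\Ntr}\Delta t^{\dist}E_{\nu}$ and the resulting partial geometric sum $\frac{1-\kappa^{\Nit}}{1-\kappa}e^{-\gamma^{\rm r}\Ntr}$. Your explicit remarks on $E_0=0$, on ${\bf (A3)}\Rightarrow{\bf (A2)}$, and on the uniform boundedness of the iterates only make more precise what the paper states in passing.
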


\begin{proof}	
We first show the boundness of $\tau_{(\nu)}^{\dist}(z)$.
Due to the convergence \eqref{iter-converge-eq}, we have $\|\tau_{(\nu)}^{R}(z)\|_{1} \le C$ for $1\le \nu\le \Nit$ and $z\in\mathscr{C}$, where $C$ depends on $\|\tau^{\dist}(z)\|_{1}$ and $\|\tau_{(0)}^{\dist}(z)\|_{1}$.
We mention that $\|\tau^{\dist}(z)\|_{1}$ is bounded by some positive constant independent on the truncation $\dist$ and $z$ due to the exponentially off-diagonal decay of $\tau^{\dist}(z)$ in \eqref{inv-X-est}.

By subtracting \eqref{SPM:iteration} from \eqref{trun-iter-eq}, we have
\begin{align*}		
\tau_{(\nu)}^{\dist\Ntr}(z) - \tau_{(\nu)}^{\dist}(z)
& = \Big(\tau^{{\rm r},\dist\Ntr}(z) - \tau^{{\rm r},\dist}(z) \Big) \Big( I +\Delta t^{\dist}(z) \tau_{(\nu-1)}^{\dist}(z) \Big) 
\\[1ex]
& \qquad \qquad
+ \tau^{{\rm r},\dist\Ntr}(z)  \Delta t^{\dist}(z) \Big( \tau_{(\nu-1)}^{\dist\Ntr}(z) - \tau_{(\nu-1)}^{\dist}(z) \Big).
\end{align*}
From {\bf (A3)}, \eqref{numer-truncation-sparse-def} and the boundness of $\tau_{(\nu)}^{\dist}(z)$, it then follows that
\begin{align*}
\left\| \tau_{(\nu)}^{\dist\Ntr}(z) - \tau_{(\nu)}^{\dist}(z) \right\|_1
& \leq Ce^{-\gamma^{\rm r} \Ntr} + \kappa 
\| \tau_{(\nu-1)}^{\dist\Ntr}(z) - \tau_{(\nu-1)}^{\dist}(z) \|_1 
\\[1ex] 
& \leq C e^{-\gamma^{\rm r} \Ntr} (1+\kappa+\cdots+\kappa^{\nu-1})
\leq C \frac{1-\kappa^{\nu}}{1-\kappa} e^{-\gamma^{\rm r} \Ntr} 
\qquad 1\le \nu \le\Nit,   
\end{align*}
where $C$ depends on the dimension $d$, the local potential $v_n$, the reference potential $v_n^{\rm r}$ and the initial guess.
Combining this with \eqref{tau-site-truncate-err} and \eqref{iter-converge-eq}, we obtain
\begin{eqnarray}\nonumber
\left| \Big(\tau^{\dist\Ntr}_{(\Nit)}(z)\Big)_{nn}-\tau_{nn}(z) \right| 
\le C \left( e^{-\eta\dist} +  \kappa^{\Nit} + \frac{1-\kappa^{\Nit}}{1-\kappa} e^{-\gamma^{\rm r} \Ntr} \right),
\end{eqnarray}
where $C$ depends on $C^{\rm r}$, $\gamma^{\rm r}$, $\sigma$, $d$, $v_n$, $v_n^{\rm r}$ and the initial guess, but not on $z$, $\dist$, $\Nit$ and $\Ntr$.
This completes the proof.
\end{proof}

We then approximate the local DoS on cell $\Omega_n$ by
\begin{eqnarray*}
\mathscr{D}^{\dist\Nit\Ntr}_n(f) := \frac{1}{2\pi i} \oint_{\mathscr{C}} f(z) \int_{\Omega_n} -i\delta_{nn'} \sqrt{z} \zeta_{n}(\vr,z) \xi_{n}(\vr,z) + \zeta_{n}(\vr;z) \Big(\tau^{\dist\Ntr}_{(\Nit)}(z)\Big)_{nn} \zeta_{n}(\vr;z) \dd\vr \dd z.
\end{eqnarray*}
An immediate consequence of Theorem \ref{thm:iter-convergence} is the following convergence of local DoS with respect to the numerical parameter $\dist$, $\Nit$ and $\Ntr$
\begin{eqnarray*}
|\ldos^{\dist \Nit \Ntr}(f) - \ldos(f)| \leq C \left( e^{-\eta\dist} + \kappa^{\Nit} + \frac{1-\kappa^{\Nit}}{1-\kappa} e^{-\gamma^{\rm r} \Ntr} \right) .
\end{eqnarray*}

\section{Numerical experiments}
\label{sec:numerics}
\setcounter{equation}{0} \setcounter{figure}{0}

In this section we will present the numerical experiments of some typical systems to support the theory.
All simulations are performed on a PC with Intel Core i9-2.3 GHz and 16GB RAM, by using the Matlab package {\em ScreenedKKR} \cite{github}.
To test the numerical errors, the results obtained by using sufficiently large discretizations are taken to be the exact ones.
We use the Fermi-Dirac function to measure the DoS, that is, we will test the convergence of the electron density approximations. 
To efficiently evaluate the local DoS, an alternative contour in \ref{append:contour} is used.
We point out that the fixed-point iteration scheme \eqref{SPM:iteration} would break down if assumption {\bf (A3)} is not satisfied for a few energy parameters on $\mathscr{C}$.
In our numerical experiments, the TFQMR solver is applied for such cases.

We will use the rectangular potentials \eqref{ref-potential-def} as screened potentials to construct the reference systems.
In particular, we compare different screened potentials $V^{\rm r}_1$ and $V^{\rm r}_2$ with the height and range parameters $(W,H)=(0.12, 10)$ and $(W,H)=(0.15, 20)$, respectively.

\vskip 0.3cm

\noindent
{\bf Example 1.} (periodic system)
We consider a periodic system in $\R$ with lattice $\Lambda=\Z$, 
where the potential is given by
\begin{eqnarray}\nonumber
V(x) = a \sum_{k\in\Lambda} e^{-\frac{|x-k|^2}{b^2}} \qquad x\in\R, \qquad \text{with}~a=10, ~b=0.1.
\end{eqnarray}
We first display the off-diagonal decay of reference SPM at the first Matsubara energy point $z_1$ (defined in \eqref{matsubara-def}) in Figure \ref{fig-ref_decay}.
As mentioned in Remark \ref{remark:tau-decay}, the SPM has the slowest off-diagonal decay rate at this point.
We observe that using the screened potential can give a much faster off-diagonal decay rate of the SPM, compared with the case of free-electron gas.

We then present the convergence of $\tau_{nn}^R(z)$ at the first and second Matsubara energies $z_1$ and $z_2$ in Figure \ref{fig-X_1d}, and the convergence of local DoS in Figure \ref{fig-ldos_1d}.
We observe that the errors for both quantities decay exponentially fast as the size of scattering region $R$ increases, which is consistent with our theory.
The pictures also suggest that the convergence is not sensitive to the choice of screened potentials.

\begin{figure}[htb!]
\centering
\subfloat[]{
\includegraphics[width=0.33\textwidth]{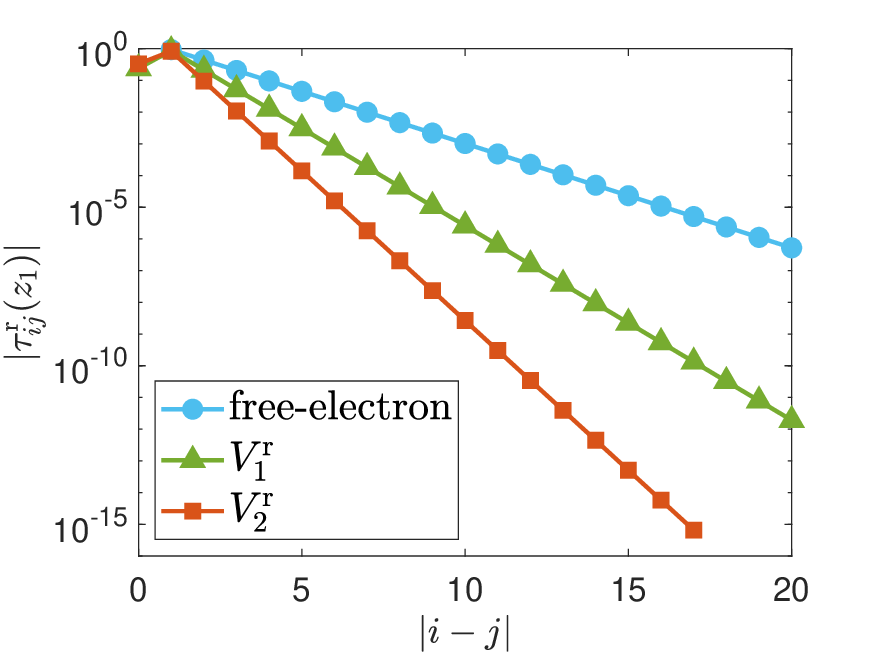}
\label{fig-ref_decay}
}
\subfloat[]{
\includegraphics[width=0.33\textwidth]{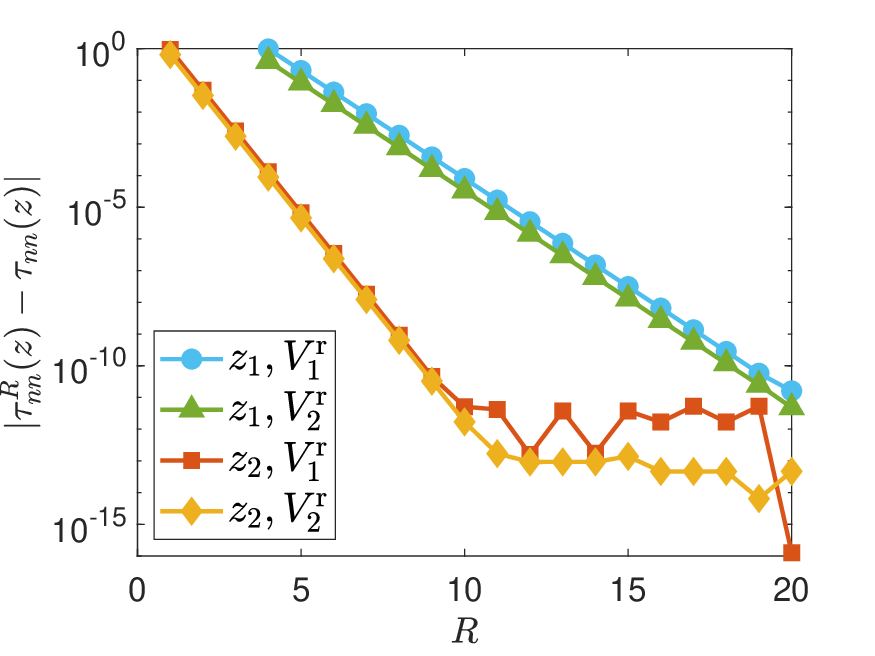}
\label{fig-X_1d}
}
\subfloat[]{
\includegraphics[width=0.33\textwidth]{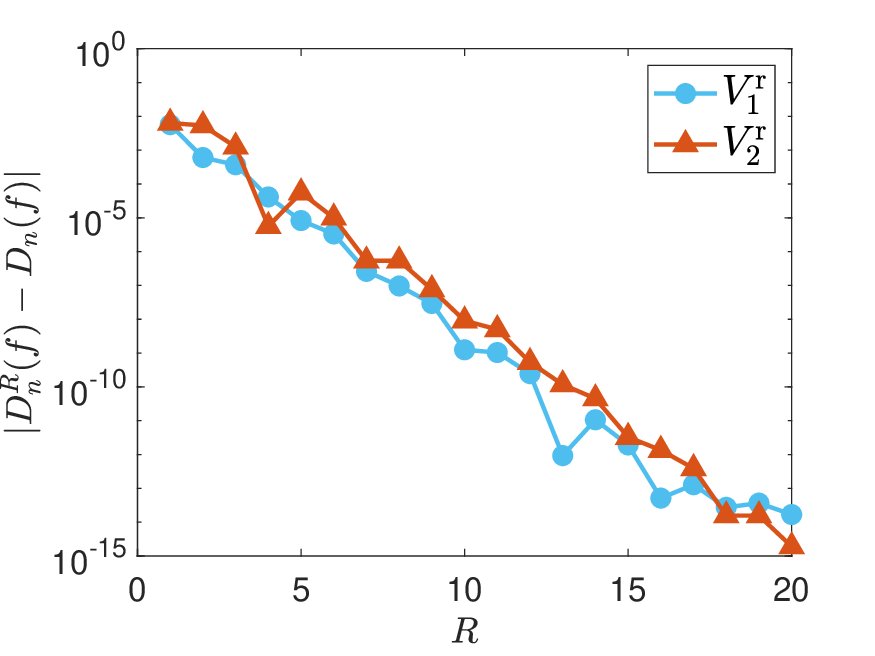}
\label{fig-ldos_1d}
}	
\caption{Example 1: (a) Off-diagonal decay of reference SPM; (b) Convergence of SPM;  (c) Convergence of local DoS.}
\end{figure}

We then test the convergence of various MST algorithms.
For the fixed-point iteration, we only take $V^{\rm r}_1$ as the screened potential as {\bf (A3)} is a necessary condition for this scheme.
We present the convergence of $\tau_{nn}^R(z)$ at the first and second Matsubara energies and the local DoS with respect to the number of iterations $\Nit$ in Figure \ref{fig-tau-it} and Figure \ref{fig-converg_PNit_1d} when $R$ and $\Ntr$ are sufficiently large.
We see similar convergence rates with different choices of initial guess, and find that the initial guess $\tau^{{\rm r}, R}(z)$ motivated from the conception of Born series is slightly better than random chosen one.
We show in Figure \ref{fig-tau-tr} and Figure \ref{fig-converg_Ntr_1d} the convergence of $\tau_{nn}^R(z)$ and local DoS with respect to scattering length of reference $\Ntr$ when $R$ and $\Nit$ are sufficiently large.
We see that both quantities give exponential convergence rates, which are consistent with our theory.
Besides the simple fixed-point iteration, we also implement the TFQMR iterative scheme in Figure $4.4$ and Figure $4.5$.
Compared to Figure \ref{fig-tau-it} and Figure \ref{fig-converg_PNit_1d}, it is shown in Figure \ref{fig-tau-tfqmr-it} and Figure \ref{fig-converg_Nit_tfqmr_1d} that the TFQMR solver can also accelerate the convergence for $V^{\rm r}_1$ with respect to the number of iterations.

\begin{figure}[htb!]
\centering
\subfloat[]{	\includegraphics[width=0.4\textwidth]{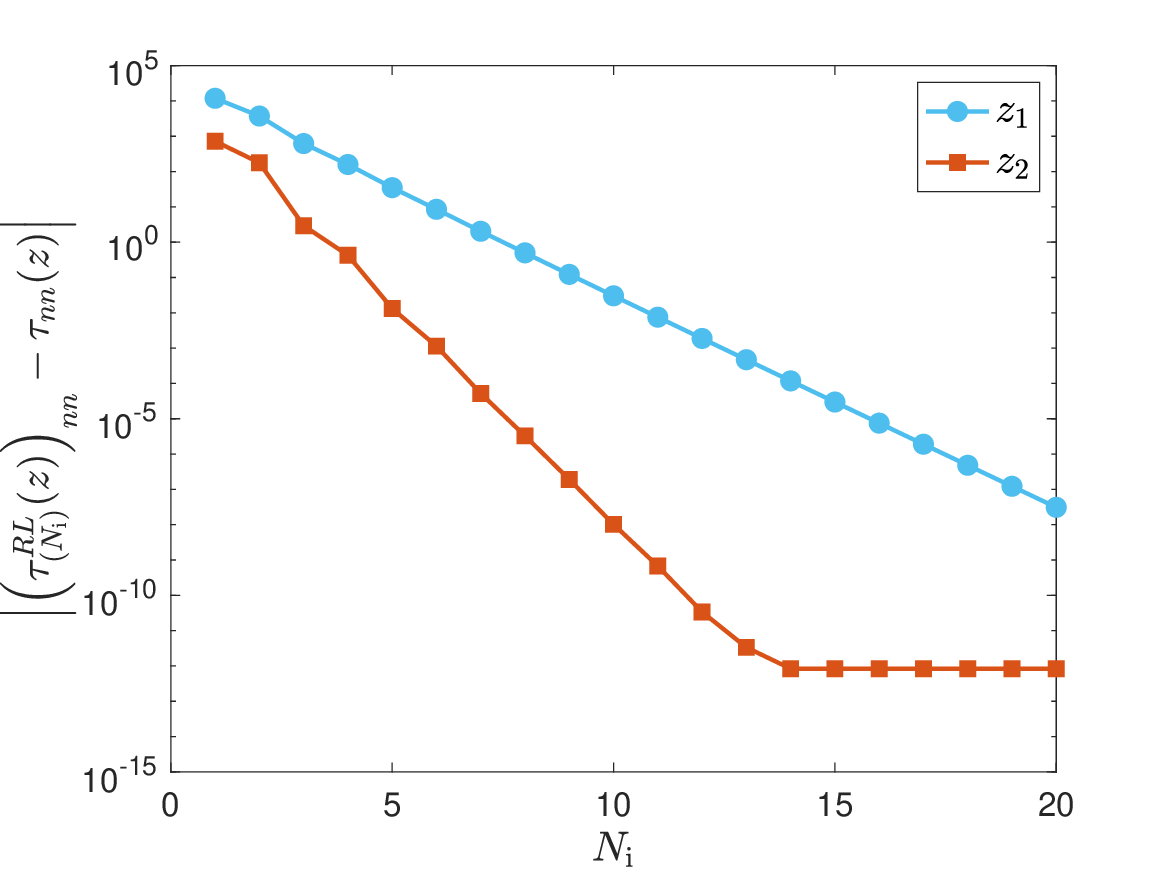}
	\label{fig-tau-it}
}
\subfloat[]{	\includegraphics[width=0.4\textwidth]{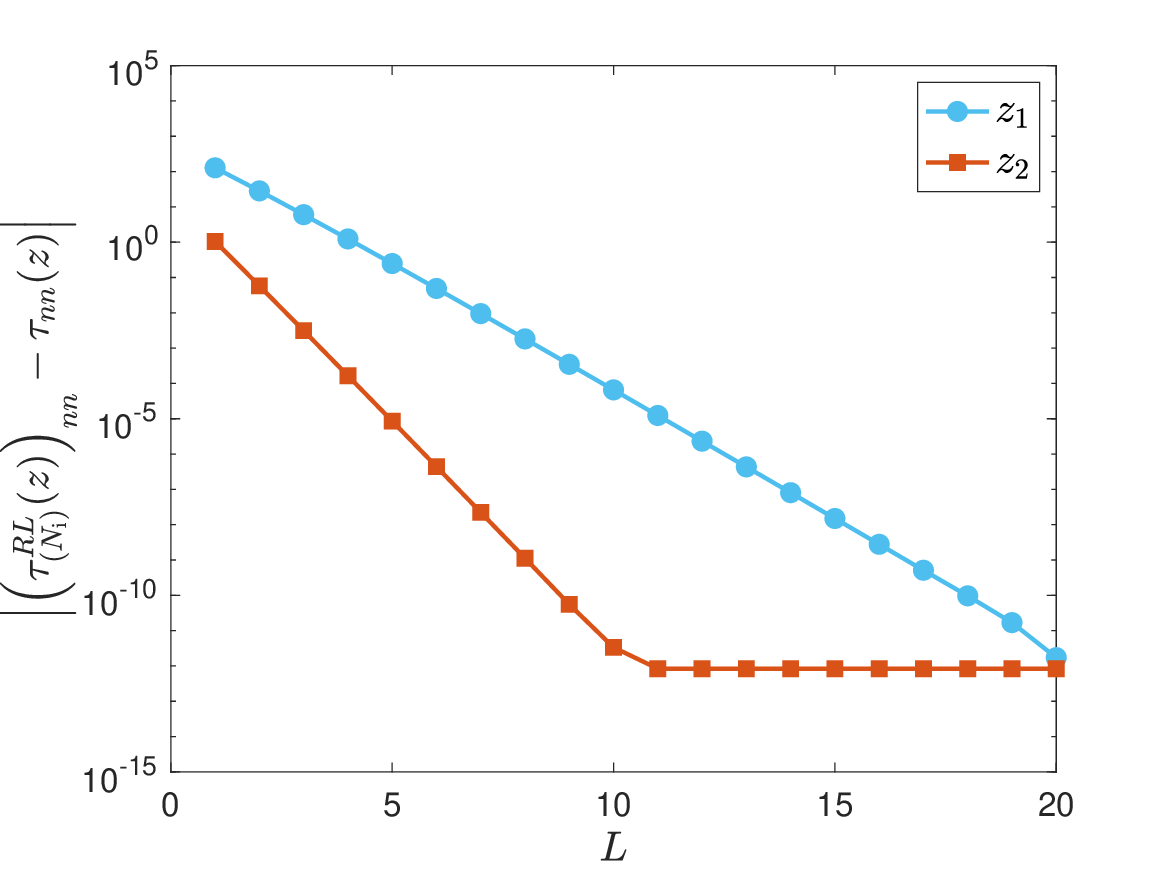}
	\label{fig-tau-tr}
}	
\caption{Example 1: Convergence of SPM by fixed-point iteration. (a) Error w.r.t. number of iterations $\Nit$; (b) Error w.r.t. scattering length of reference $\Ntr$.}
\end{figure}

\begin{figure}[htb!]
\centering
\subfloat[]{
	\includegraphics[width=0.4\textwidth]{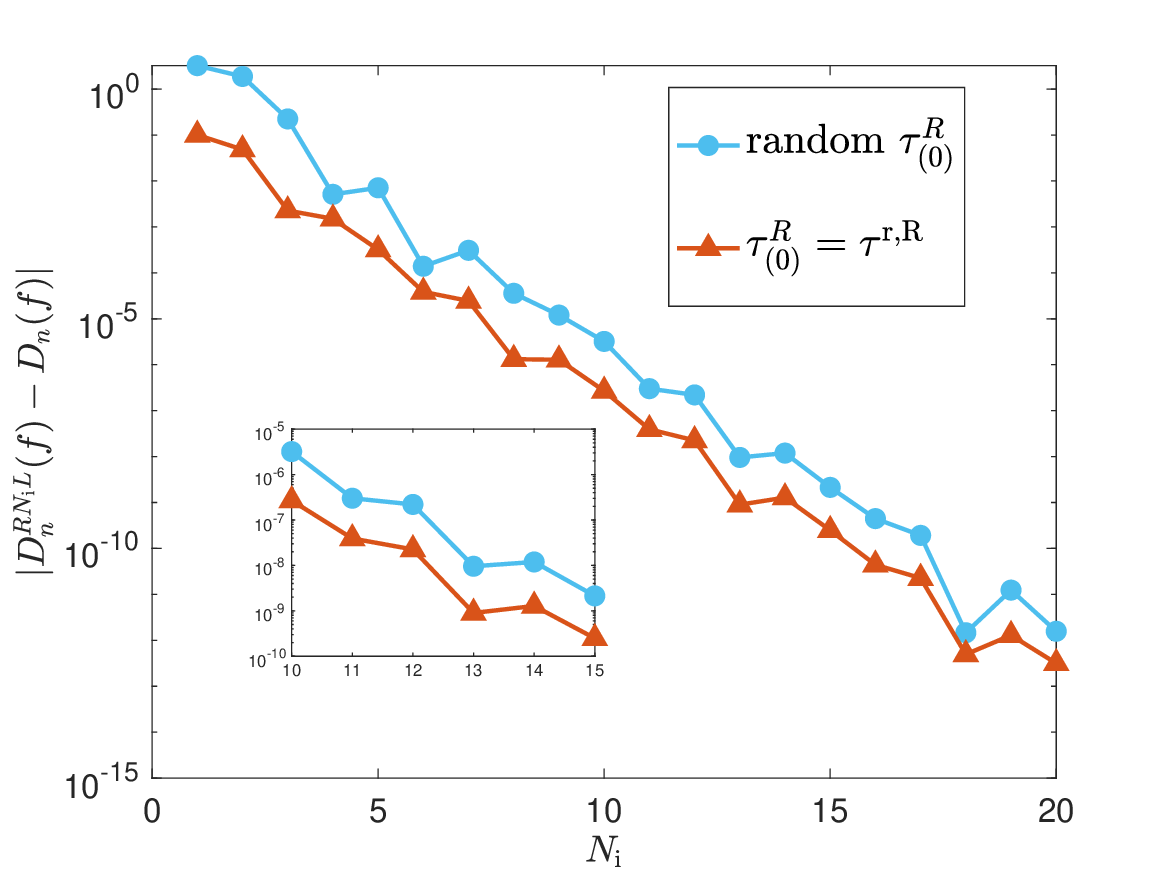}
	\label{fig-converg_PNit_1d}
}
\subfloat[]{
	\includegraphics[width=0.4\textwidth]{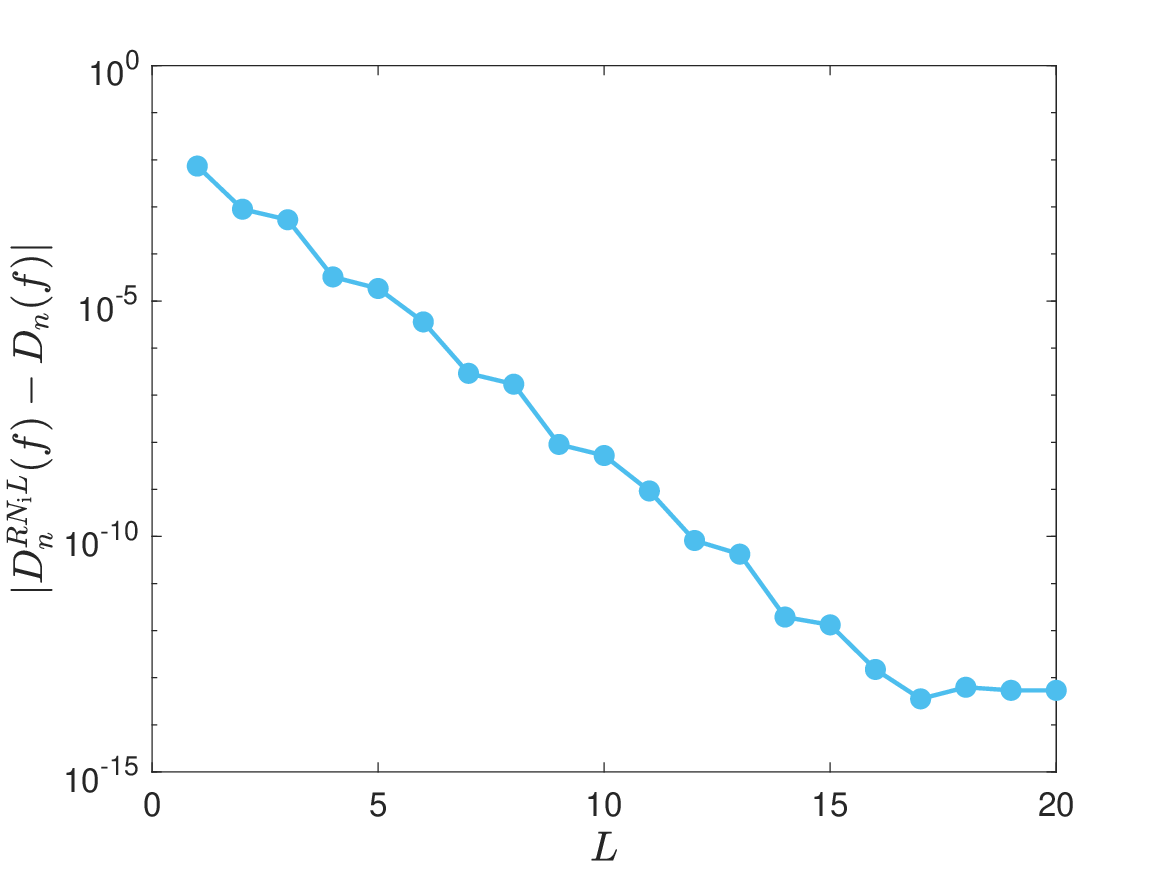}
	\label{fig-converg_Ntr_1d}
}	
\caption{Example 1: Convergence of local DoS by fixed-point iteration. (a) Error w.r.t. number of iterations $\Nit$; (b) Error w.r.t. scattering length of reference $\Ntr$.}
\end{figure}
\begin{figure}[htb!]
\centering
\subfloat[]{	\includegraphics[width=0.4\textwidth]{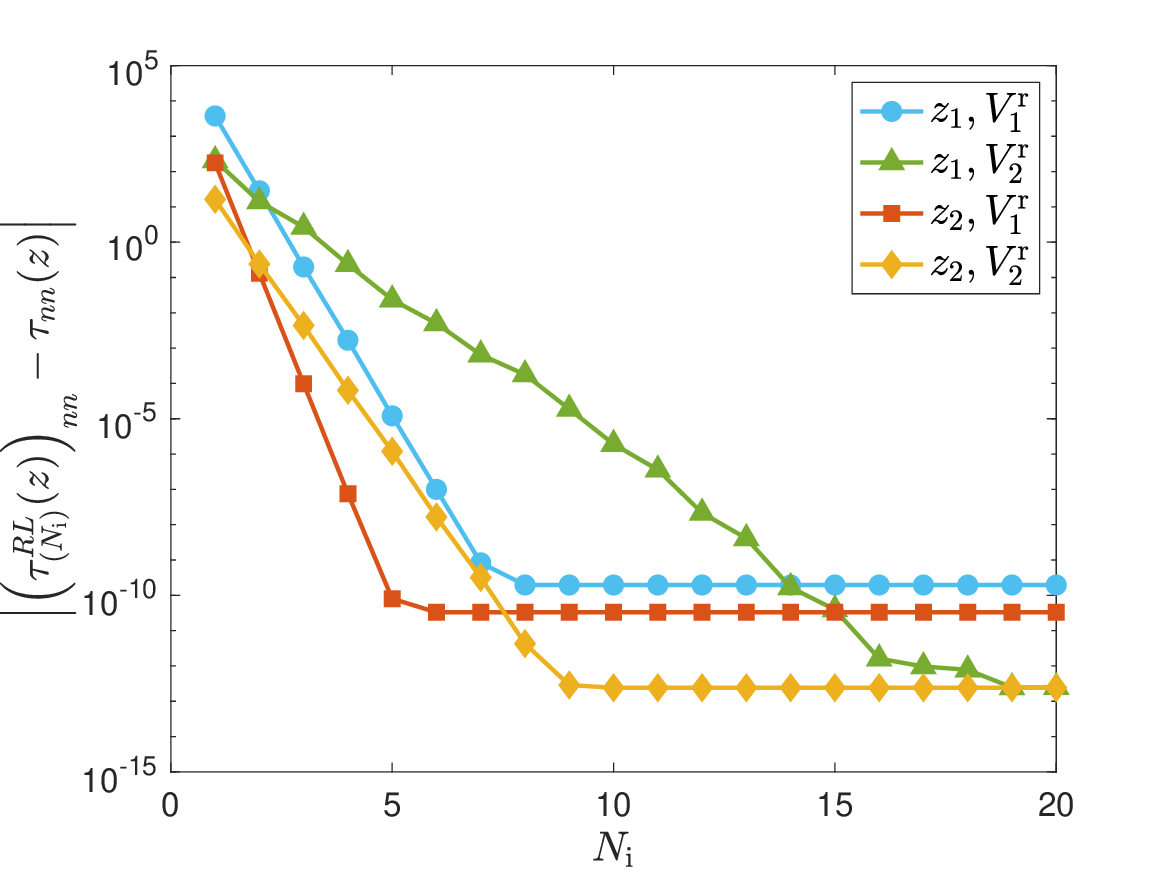}
	\label{fig-tau-tfqmr-it}
}
\subfloat[]{	\includegraphics[width=0.4\textwidth]{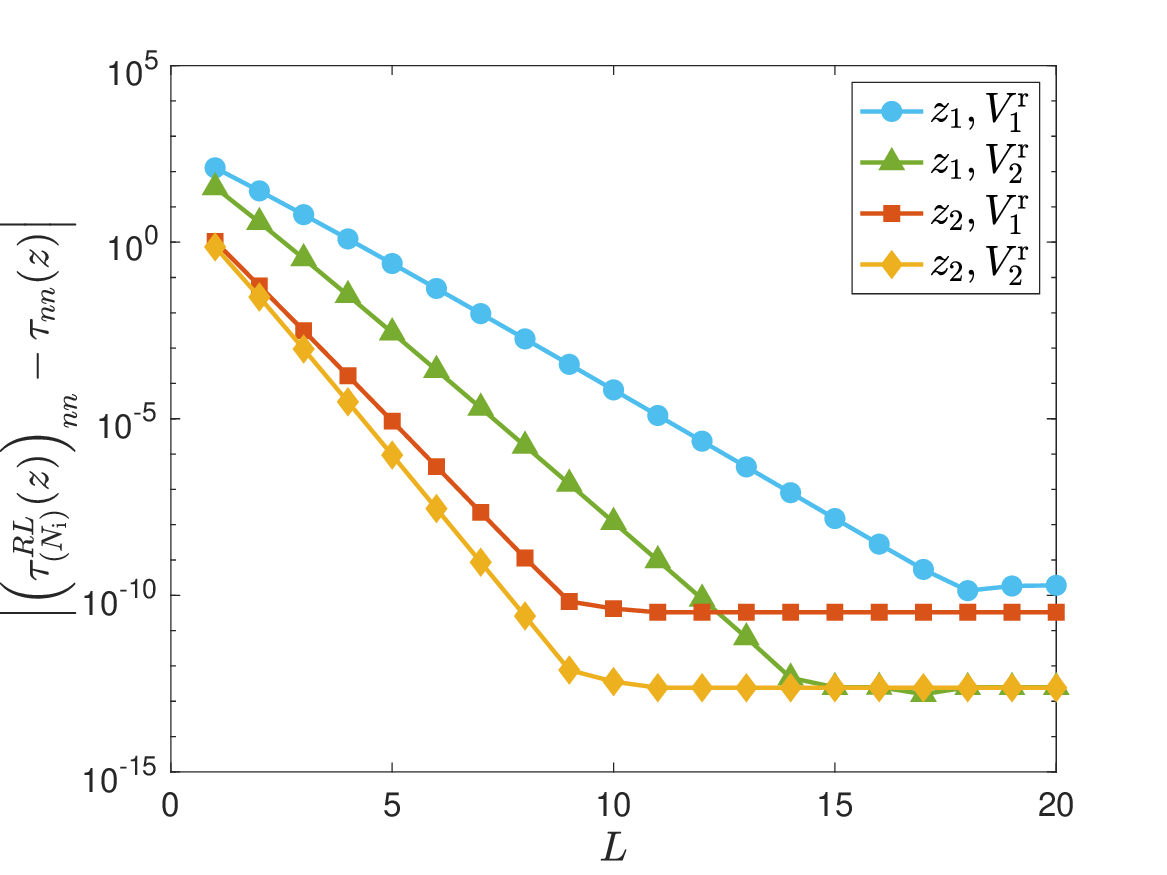}
	\label{fig-tau-tfqmr-tr}
}	
\caption{Example 1: Convergence of SPM by TFQMR iteration. (a) Error w.r.t. number of iterations $\Nit$; (b) Error w.r.t. scattering length of reference $\Ntr$.}
\end{figure}
\begin{figure}[htb!]
\centering
\subfloat[]{
\includegraphics[width=0.4\textwidth]{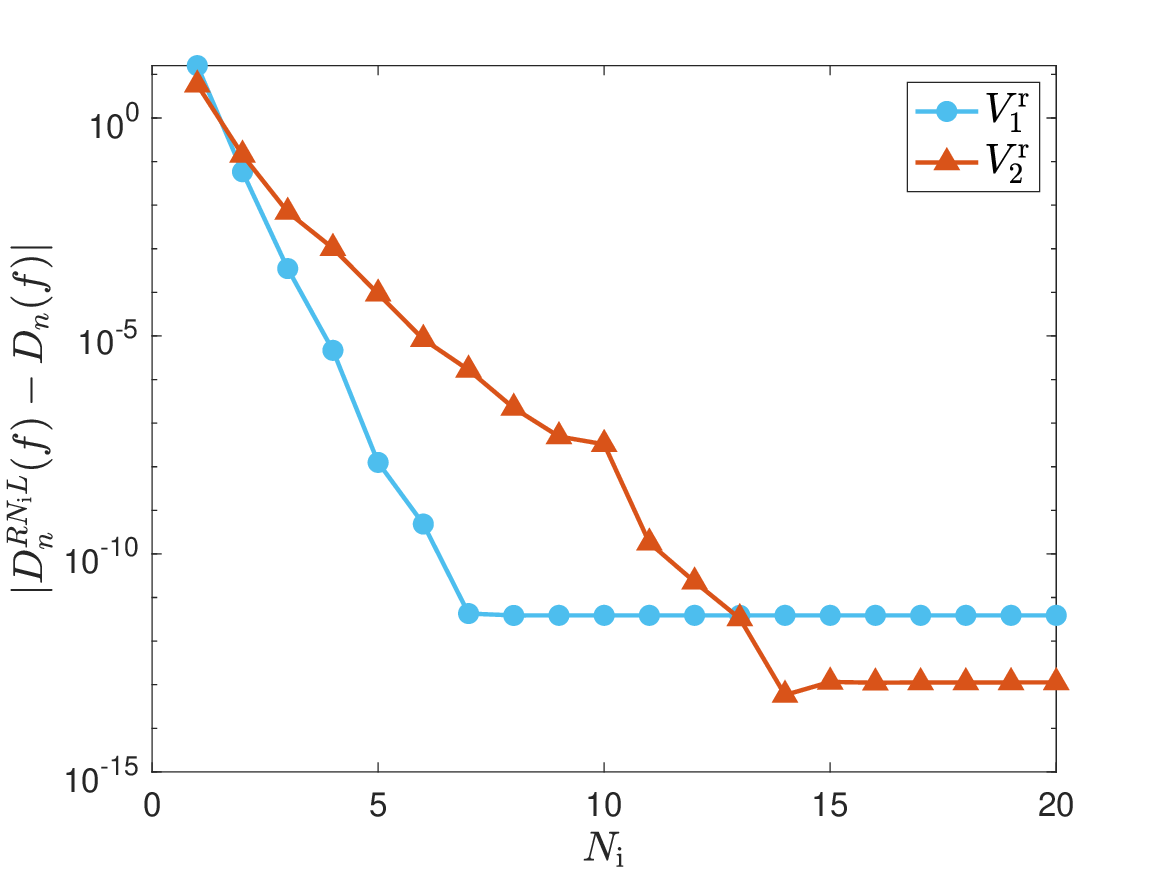}
\label{fig-converg_Nit_tfqmr_1d}
}
\subfloat[]{
\includegraphics[width=0.4\textwidth]{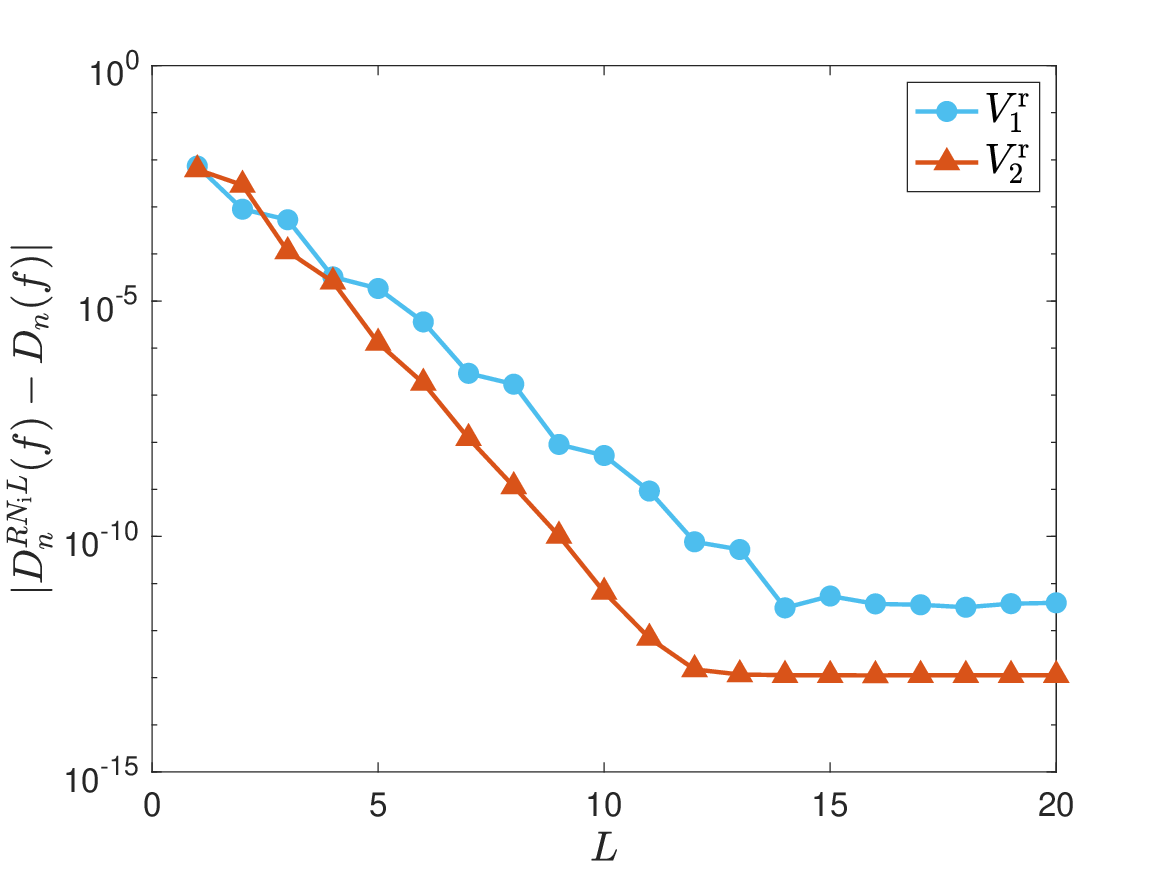}
\label{fig-converg_Ntr_tfqmr_1d}
}	
\caption{Example 1: Convergence of local DoS by TFQMR iteration.            (a) Error w.r.t. number of iterations $\Nit$; (b) Error w.r.t. scattering length of reference $\Ntr$.}
\end{figure}

\vskip 0.3cm

\noindent
{\bf Example 2.} (two-component alloy)
We consider a $A_{1/2}B_{1/2}$-type alloy in $\R$ with lattice $\Lambda=\Z$, where two species of atom $A$ and $B$ randomly occupies each site with equal possibility, respectively.
The potentials of $A$ and $B$ in $\Omega_k~(k\in\Lambda)$ are given by
\begin{eqnarray}\nonumber
V_A(x) = \frac{a}{\sqrt{|x-k|^2+1}},\quad V_B(x) = \frac{b}{\sqrt{|x-k|^2+1}}
\qquad \text{with}~ a=-1, ~b=-2.
\end{eqnarray}
We display the exponential convergence of  local DoS in Figure \ref{fig-ldos_coul_1d}, which shows barely relevance to the choice of reference.
Furthermore, the exponential convergence of TFQMR iteration are observed in Figure \ref{fig-Nit_coul_1d} and \ref{fig-Ntr_coul_1d}, which are consistent with our theoretical results. 

\begin{figure}[htb!]
\centering
\subfloat[]{
	\includegraphics[width=0.33\textwidth]{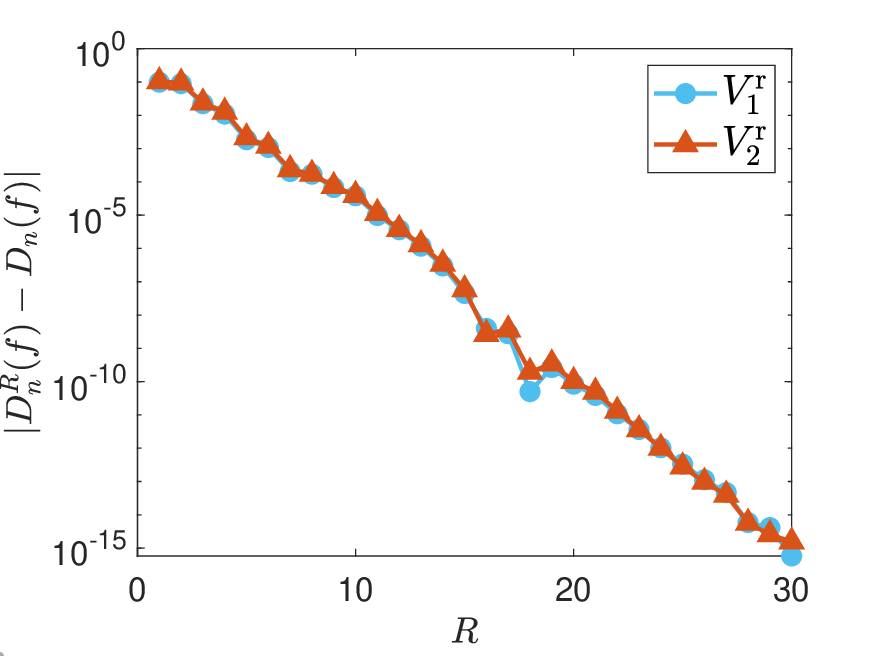}
	\label{fig-ldos_coul_1d}
}	
\subfloat[]{
	\includegraphics[width=0.33\textwidth]{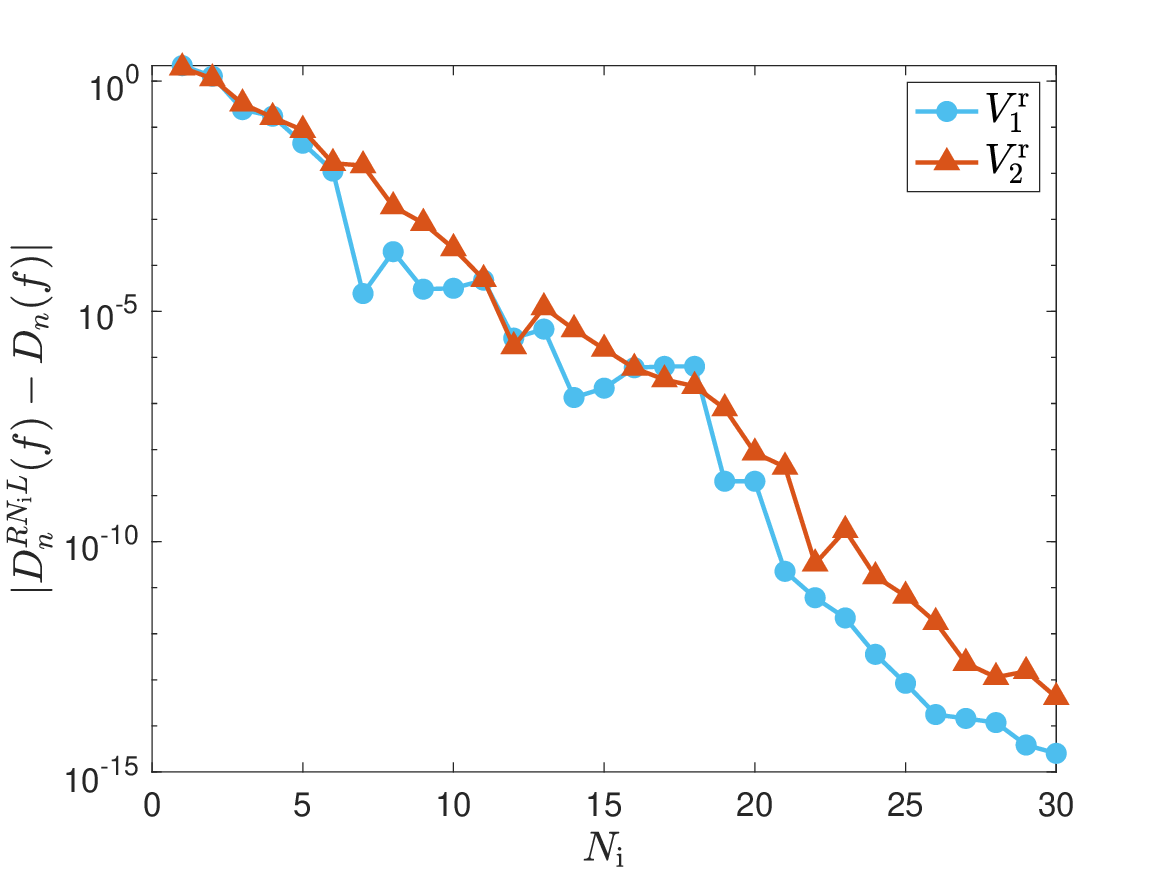}
	\label{fig-Nit_coul_1d}
}
\subfloat[]{
	\includegraphics[width=0.33\textwidth]{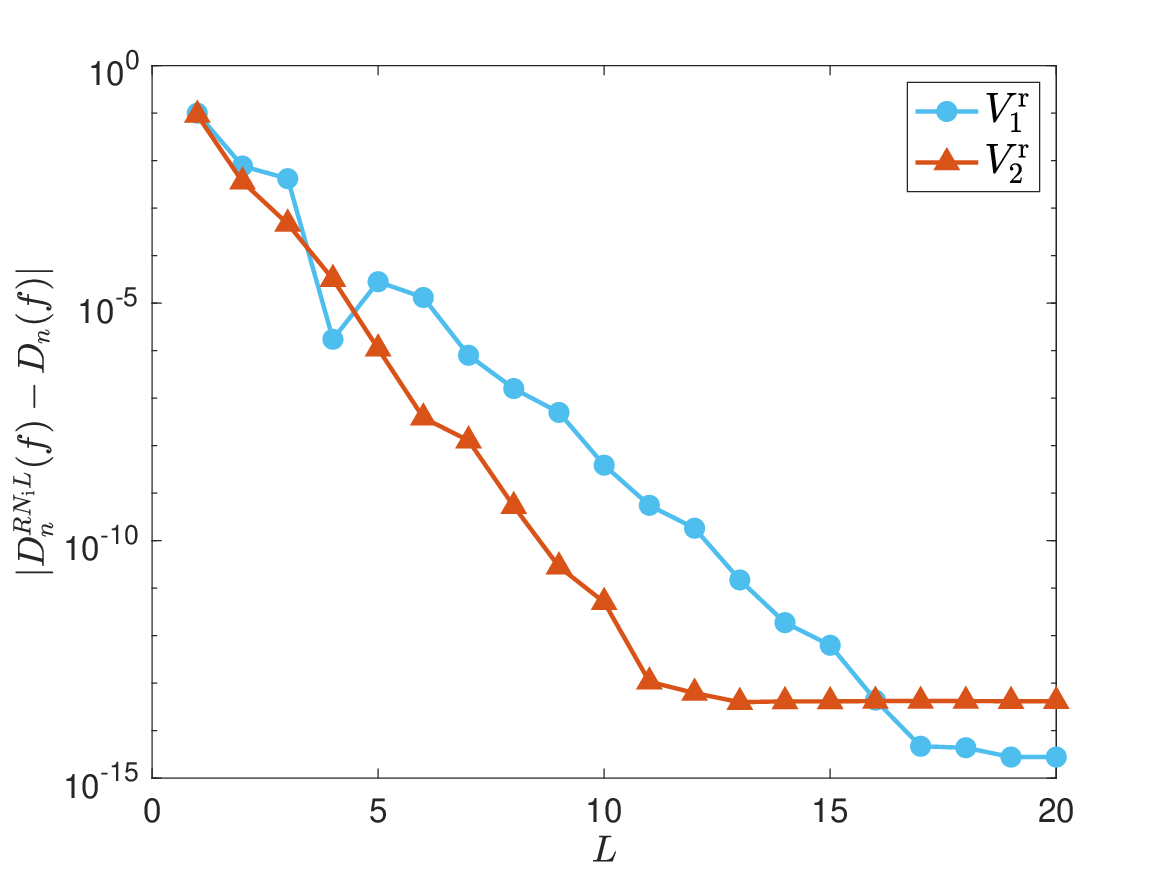}
	\label{fig-Ntr_coul_1d}
}	
\caption{Example 2: Convergence of local DoS. (a) Error w.r.t. size of scattering region $R$; (b) Error w.r.t. number of iterations $\Nit$; (c) Error w.r.t. scattering length of reference $\Ntr$.}
\end{figure}

\vskip 0.3cm

\noindent
{\bf Example 3.} (random potential)
We consider a quasi-1D system in $\R^3$ with the lattice $\Lambda=\{(n,0,0)\}_{n\in\Z}$ with a random potential.
The potential is given by
\begin{eqnarray}\nonumber
V(\vr) = \left\{
\begin{array}{ll}
\displaystyle a_k e^{-\frac{|x-k|^2}{b^2}} \qquad &\vr\in\bar{\Omega}_k \quad \text{for}~k\in\Lambda, \qquad \text{with} ~b=0.1, \\[2ex]
0 & \text{otherwise},
\end{array}\right.
\end{eqnarray}
where $\bar{\Omega}_k=\{\vr\in\R^3: |\vr-k|\le R_{\rm mt}\}$ represents a muffin-tin ball with $R_{\rm mt}=0.3$,
and the random parameters $a_k~(k\in\Lambda)$ are independently and uniformly distributed in $(0,20)$.
We present in Figure \ref{fig-ldos_3d} the exponential convergence of local DoS in a single cell, and further compare the convergence of TFQMR iteration by taking different angular momentum cutoffs in Figure \ref{fig-Nit_3d} and Figure \ref{fig-Ntr_3d}.
It can be seen that the truncation of the angular momentum has only a tiny effect on the convergence.

\begin{figure}[htbp!]
\centering
\subfloat[]{
	\includegraphics[width=0.33\textwidth]{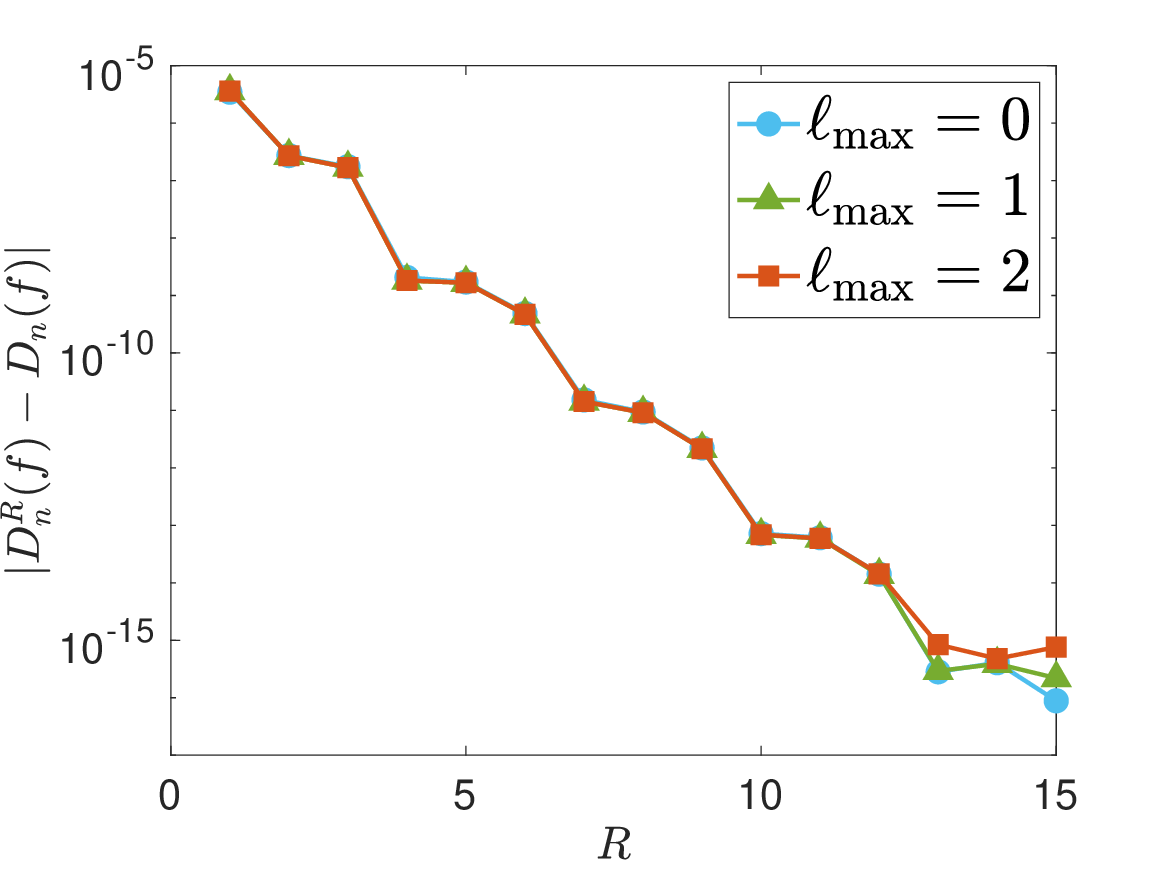}
	\label{fig-ldos_3d}	
}
\subfloat[]{
	\includegraphics[width=0.33\textwidth]{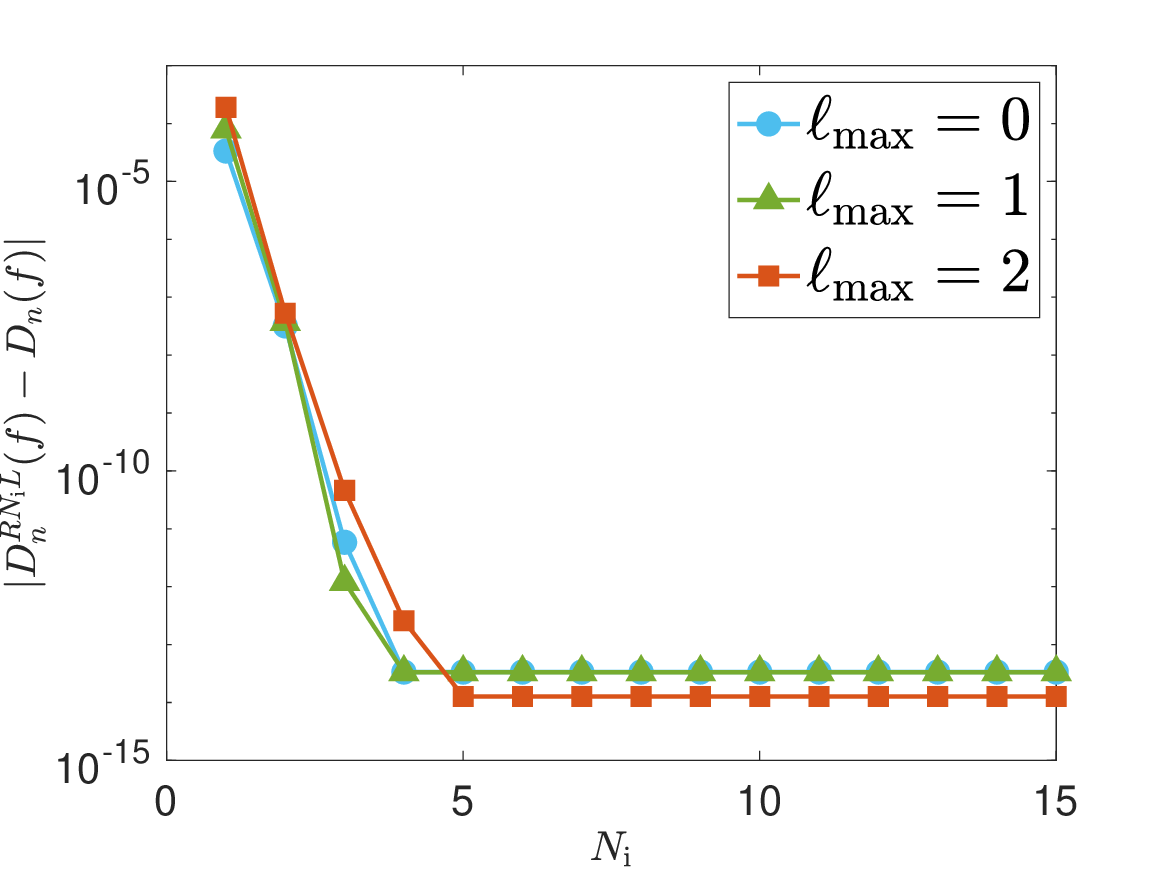}
	\label{fig-Nit_3d}
}
\subfloat[]{
	\includegraphics[width=0.33\textwidth]{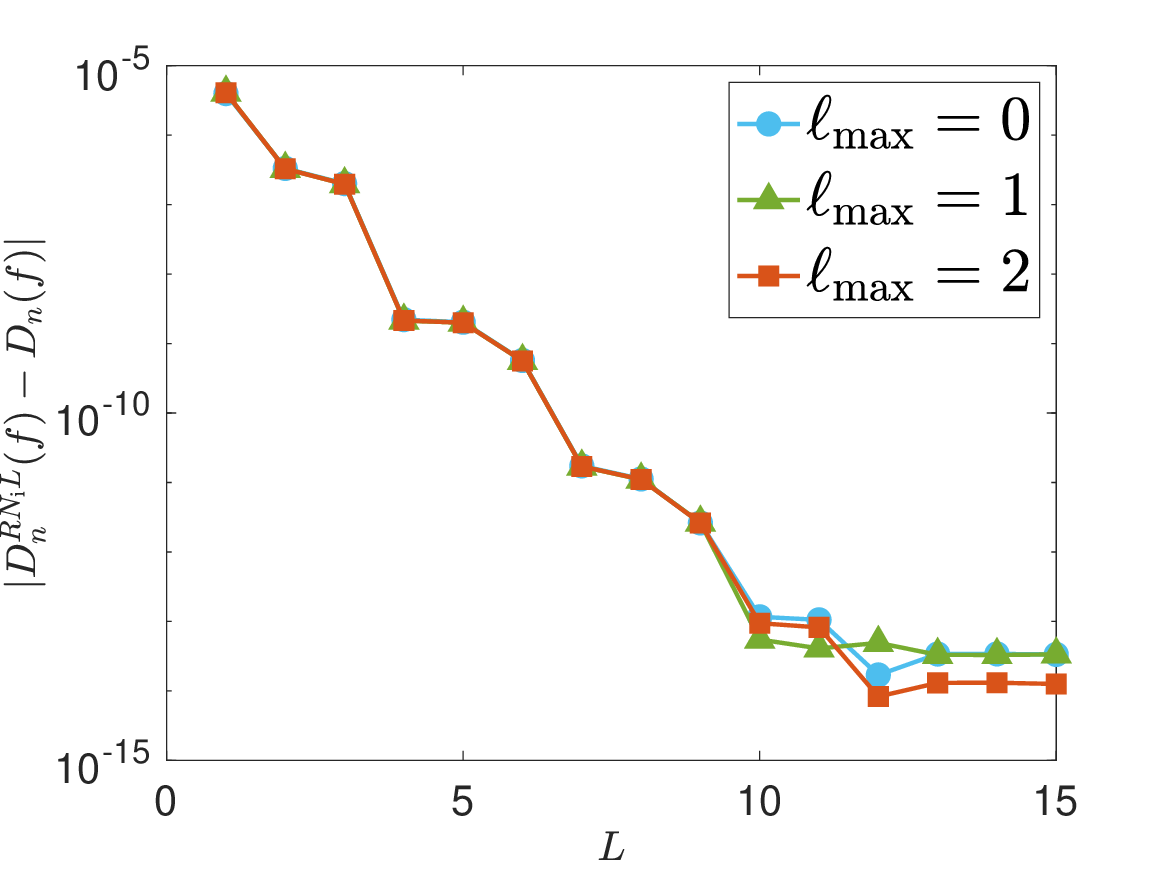}
	\label{fig-Ntr_3d}
}		
\caption{Example 3: Convergence of local DoS. (a) Error w.r.t. size of scattering region $R$; (b) Error w.r.t. number of iterations $\Nit$; (c) Error w.r.t. scattering length of reference $\Ntr$.}	
\end{figure}

\section{Conclusions}
\label{sec:conclusions}

In this paper, we study the numerical approximations of the SPM under within an appropriate screening potential.
We justify the exponential convergence of local DoS with respect to the size of scattering region and the length of scattering path, respectively.
It would be of great interest to study the spectrum distribution of typical disordered systems, such as edge states of surfaces \cite{colbrook2021computing,colbrook2023computing, thicke2021computing}, quasi-periodic systems \cite{massatt2017electronic, wang2022convergence}, and substitutional alloys \cite{ebert11, ruban08} within the MST framework.

\appendix
\renewcommand\thesection{\appendixname~\Alph{section}}

\section{Single-site scattering}
\label{append:single-site}
\renewcommand{\theequation}{A.\arabic{equation}}
\setcounter{equation}{0}

In this appendix, we elaborate on the single-site scattering problem, which can provide the local quantities, i.e., local solutions and $t$-matrix for a given local potential.
Let $n\in\Lambda$ and the local potential $v_n:=V|_{\Omega_n}$ vanishes outside $\Omega_n$, i.e. $v_n(\vr)\equiv0$ for $\vr\notin\Omega_n$.
Consider the following single-site scattering problem with a fixed energy parameter $z\in\C$
\begin{eqnarray}
\label{single-schrodinger}
\Big(-\Delta + v_n(\vr)-z\Big) \zeta_n(\vr;z)= 0 \qquad \bm{r}\in\mathbb{R}^3.
\end{eqnarray}
Note that no (far-field) boundary condition is imposed on this equation, since the solution $\zeta_n$ is severed as a basis within the MST formalism, and will be normalized in some sense below.
For the simplicity of notations, we fix the site index $n\in\Lambda$ and suppress the dependency in this appendix. 
Specifically, $\Omega_n$, $v_n$ and $\zeta_n$ will be abbreviated by $\Omega$, $v$ and $\zeta$, respectively. 
Note that the local potential in an arbitrary-shaped cell $\Omega$ can be considered in the circumscribed ball (still denoted by $\Omega$) with zero-paddings.

For an illustrative purpose, we assume $\Omega$ is an atomic ball centered at the origin with the radius $\mathsf{r}$ and $v$ is spherically symmetric, and  
refer to our previous work \cite[Appendix C]{li2023numerical} for a generalization of cases involving non-spherically symmetric potentials.
For a spherically symmetric potential $v$, the solution $\zeta$ can be written in the angular-momentum representation
\begin{eqnarray}
\label{radial-representation}
\zeta(\vr) = \sum_{\ell m} \zeta_{\ell m}(\vr) = \sum_{\ell m}  \chi_{\ell}(r;z) Y_{\ell m}(\hat{\vr}),
\end{eqnarray}
where $Y_{\ell m}$ are the spherical harmonic functions and the radial part $\chi_{\ell}$ satisfy the radial Schr\"{o}dinger equation
\begin{subequations}
	\label{radial-schrodinger}
	\begin{empheq}[left=\empheqlbrace]{align}
	\label{radial-schrodinger-a}
	\displaystyle\left(\frac{\partial^2}{\partial r^2} + \frac{2}{r} \frac{\partial }{\partial r} -  \frac{\ell(\ell+1)}{r^2} - v(r) +z \right) \chi_{\ell}(r;z)=0 
	& \qquad  0<r<\mathsf{r},
	\\[1ex]
	\label{radial-schrodinger-b}
	\displaystyle\left(\frac{\partial^2}{\partial r^2} + \frac{2}{r} \frac{\partial }{\partial r} -  \frac{\ell(\ell+1)}{r^2} +z \right) \chi_{\ell}(r;z)=0
	& \qquad r>\mathsf{r}.
	\end{empheq}
\end{subequations}
We first study \eqref{radial-schrodinger-b} outside the atomic ball with vanishing potential, which is the well-known spherical Bessel differential equation (see, e.g., \cite{gil07}). 
For given $\ell\in\mathbb{N}$, the solutions could be the spherical Bessel functions $j_{\ell}(\sqrt{z}r)$, the spherical Neumann functions $n_{\ell}(\sqrt{z}r)$, or the spherical Hankel functions $h_{\ell}(\sqrt{z}r)=j_{\ell}(\sqrt{z}r)+i n_{\ell}(\sqrt{z}r)$.
As any two of the above functions can form a complete basis set for the solutions of \eqref{radial-schrodinger-b}, the radial solutions outside the atomic ball can be represented by a linear combination of the spherical Bessel and Hankel functions
\begin{eqnarray}\nonumber
\label{radial-general-eq}
\label{chi_l}
\chi_{\ell}(r;z) = j_{\ell}(\sqrt{z}r) + t_{\ell}(z) h_{\ell}(\sqrt{z}r) \qquad r>\mathsf{r},
\end{eqnarray}
where the normalization is included.
From a physical point of view, $\chi_{\ell}$ can be viewed as a superposition of an incoming wave and a scattering wave,  corresponding to the Bessel and Hankel functions, respectively.
In particular, the $t$-matrix $t_{\ell}$ determines the relationship between these two waves.

From the numerical intuition, the $t$-matrix can be determined by matching the values and derivatives of the radial solution at $r=\mathsf{r}$.
Since multiplying $\chi_{\ell}$ by an arbitrary constant does not change the solution of \eqref{radial-schrodinger}, it is only required to match the ``logarithm derivatives" $\big(\log(\chi_{\ell})\big)'$ at $r=\mathsf{r}$: 
\begin{eqnarray}
\label{log-match}
\frac{\chi_{\ell}(r;z)}{\chi'_{\ell}(r;z)} \bigg|_{r=\mathsf{r}^-} = \frac{j_{\ell}(\sqrt{z}r) + t_{\ell}(z) h_{\ell}(\sqrt{z}r)}{\sqrt{z} j'_{\ell}(\sqrt{z}r) + \sqrt{z} t_{\ell}(z) h'_{\ell}(\sqrt{z}r)}\bigg|_{r=\mathsf{r}^+}.
\end{eqnarray}
Then a direct calculation leads to
\begin{eqnarray}
\label{log-deriv}
t_{\ell}(z)=\frac{\sqrt{z} \chi_{\ell}(\mathsf{r};z) j'_{\ell}(\sqrt{z} \mathsf{r}) -  \chi'_{\ell}(\mathsf{r};z) j_{\ell}(\sqrt{z}\mathsf{r})}
{\chi'_{\ell}(\mathsf{r};z) h_{\ell}(\sqrt{z}\mathsf{r}) - \sqrt{z} \chi_{\ell}(\mathsf{r};z) h'_{\ell}(\sqrt{z} \mathsf{r})} ,
\end{eqnarray}
where we denoted by $\chi'_{\ell}(\mathsf{r};z):=\frac{\partial}{\partial r}\chi_{\ell}(r;z)\big|_{r=\mathsf{r}^-}$ the left-hand derivative of $\chi_{\ell}$ at $r=\mathsf{r}$.
Note that the so-called ``asymptotic problem" \label{asymptotic_problem} (see, e.g., \cite{chen15a} and \cite{singh06}) with $\chi_{\ell}(\mathsf{r};z)=0$ must be avoided such that the conditions \eqref{log-match} and \eqref{log-deriv} can make sense.
We remark that $t_{\ell}$ is well defined as any multiplicative constant involved in the regular solution (finite at the origin) $\chi_{\ell}$ can be canceled in \eqref{log-deriv}.
In addition, the local solutions $\zeta_{\ell m}(\vr;z) =\chi_{\ell}(r;z)Y_{\ell m}(\hat{\vr})$ can be obtained by solving the equation \eqref{radial-schrodinger-a} with the boundary conditions
\begin{eqnarray}\nonumber
\left\{
\begin{array}{l}
\chi_{\ell}(\mathsf{r}; z) =  j_{\ell}(\sqrt{z} \mathsf{r}) + t_{\ell}(z) h_{\ell}(\sqrt{z} \mathsf{r}), \\[1ex]
\displaystyle 
\frac{\partial \chi_{\ell}(r; z)}{\partial r} \Bigg|_{r=\mathsf{r}} = \sqrt{z} j'_{\ell}(\sqrt{z} \mathsf{r}) + \sqrt{z} t_{\ell}(z) h'_{\ell}(\sqrt{z} \mathsf{r}).
\end{array}\right.
\end{eqnarray}

On the other hand, we can define another  solutions $\xi_{\ell m}(\vr;z) =\Xi_{\ell}(r;z)Y_{\ell m}(\hat{\vr})$, whose radial parts also satisfy the equation  \eqref{radial-schrodinger-a}, but with different boundary conditions
\begin{eqnarray}\nonumber
\left\{
\begin{array}{l}
	\Xi_{\ell}(\mathsf{r}; z) =  h_{\ell}(\sqrt{z} \mathsf{r}), \\[1ex]
	\displaystyle 
	\frac{\partial 	\Xi_{\ell}(r; z)}{\partial r} \Bigg|_{r=\mathsf{r}} = \sqrt{z} h'_{\ell}(\sqrt{z} \mathsf{r}).
\end{array}\right.
\end{eqnarray}
The boundary conditions have been shown to result in solutions that usually diverge near the origin, which is referred to {\it irregular} solutions \cite{mavropoulos06, rusanu2011green}.

It is worth noting that all the local quantities can be obtained by solving the single-site scattering problem with a given local potential.
Regardless of the shape and symmetry of the local potentials, the computational complexity of the single scattering problem is relatively small in the MST framework. 
Therefore, we typically assume that all local quantities are computed with sufficient accuracy.

\section{Alternative Cauchy contour}
\label{append:contour}
\renewcommand{\theequation}{B.\arabic{equation}}
\renewcommand{\thefigure}{B.\arabic{figure}}
\setcounter{figure}{0}
\setcounter{equation}{0}

There is an alternative approach to write the contour integral representation of local DoS.
Let $\widetilde{\mathscr{C}}$ be a contour that has a more ``regular" shape but allowed to contain several poles of $f$, which is different from the contour defined in Section \ref{sec:disorder} (see Figure \ref{fig-oval-contour}, where the poles inside the contour are denoted by $z_j,~j=\pm 1, \cdots \pm J$).

\begin{figure}[htbp]
	\centering
	\includegraphics[width=10.cm]{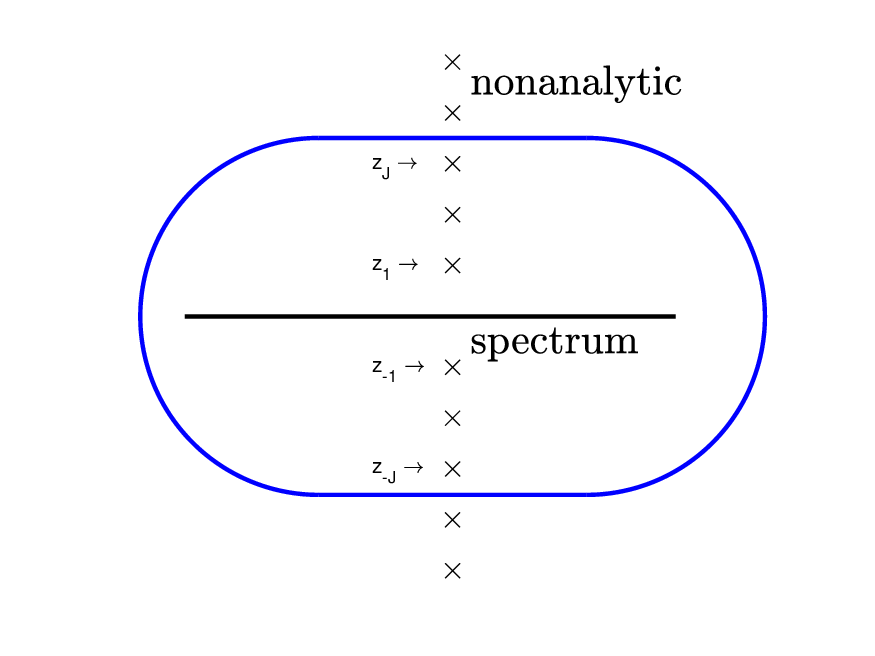}
	\vskip -0.5cm
	\caption{A schematic illustration of quasi-oval contour.}
	\label{fig-oval-contour}
\end{figure}

By using the residue theorem, we have the following representation of local DoS that is equivalent to \eqref{ldos_n} 
\begin{eqnarray}
\label{ldos_2}
\mathscr{D}_n(f) 
= \frac{1}{2\pi i} \int_{\Omega_n} \Bigg( \oint_{\widetilde{\mathscr{C}}} f(z)  G(\vr,\vr;z) \dd z
+ 2\pi i \sum_{j=\pm 1,\cdots,\pm J}  G(\vr,\vr,z_j) \cdot \underset{z=z_j}{{\rm res}}  f(z) \Bigg) \dd\vr,
\end{eqnarray}
where $\underset{z=z_j}{{\rm res}}  f(z)$ stands for the residue of complex function $f$ at isolated singularity $z_j$.
This formula provides a more efficient approach in practical calculations. 
The quasi-oval contour is more ``regular" than the dumbbell-shaped one used in \eqref{ldos_n}, which is therefore much easier to be parameterized and then discretized to perform the numerical quadrature along the contour.
In addition, for the Fermi-Dirac function $f_{\rm FD}$ as the observable function, the integration in \eqref{ldos_2} can be simplified by using $\displaystyle \underset{z=z_j}{{\rm res}}  f_{\rm FD}(z) = -k_{\rm B}T$.

\small
\bibliographystyle{plain}
\bibliography{bib}

\end{document}